\tikzset{auto, shorten >=1pt, >=stealth}
\pgfplotsset{compat=1.14}
\newcommand {\union} {\cup}
\newcommand {\intersection} {\cap}
\newcommand {\pmap}{\ensuremath{\rightharpoonup}}
\newcommand {\Or} {\ensuremath{\vee}}
\renewcommand {\And} {\ensuremath{\wedge}}
\newcommand {\notof} {\ensuremath{\neg}}
\newcommand {\Implies} {\ensuremath{\rightarrow}}
\newcommand {\true} {\ensuremath{\mathit{true}}}
\newcommand {\false} {\ensuremath{\mathit{false}}}
\renewcommand {\models} {\ensuremath{\vDash}}
\renewcommand {\S}        {\ensuremath{\mathcal{S}}}
\newcommand {\T}        {\ensuremath{\mathcal{T}}}
\newcommand {\forcedt}[2] {\ensuremath{\mathit{forced\textrm{-}true}(#1,#2)}}
\newcommand {\forcedf}[2] {\ensuremath{\mathit{forced\textrm{-}false}(#1,#2)}}
\newcommand {\True} {\ensuremath{\mathit{True}}}
\newcommand {\False} {\ensuremath{\mathit{False}}}
\newcommand {\mymark} {\ensuremath{*}}
\newcommand {\dom} {\ensuremath{\mathit{dom}}}
\newcommand {\den}[1] {\ensuremath{[\![\texttt{#1}]\!]}}
\newcommand{\exampleend}{\hfill \tikz \draw (0, 0) -- ++(.5ex, .5ex) -- ++(0, -1ex) -- cycle;}
\begin{document}

%---------- Title information ----------
\title{Horn-ICE Learning for Synthesizing Invariants and Contracts}
\author{Deepak D'Souza\inst{1} \and
  P.\ Ezudheen\inst{1} \and
  Pranav Garg\inst{2} \and
  P.\ Madhusudan\inst{3} \and
  Daniel Neider\inst{4} 
}

\institute{
  CSA Department, Indian Institute of Science, Bangalore, India
  \and
  Amazon India, Bangalore, India
  \and
  Department of Computer Science, University of Illinois, Urbana-Champaign, IL, USA
  \and
  Max Planck Institute for Software Systems, Kaiserslautern, Germany
}

%---------- Title ----------
\maketitle

%---------- Abstract ----------
\begin{abstract}
We design learning algorithms for synthesizing invariants using
Horn implication counterexamples (Horn-ICE), extending the ICE-learning model.
In particular, we describe a decision-tree learning algorithm
that learns from Horn-ICE samples, works in polynomial time, and uses
statistical heuristics to learn small trees that satisfy the samples.
Since most verification proofs can be modeled using Horn clauses, Horn-ICE
learning is a more robust technique to learn inductive annotations that
prove programs correct. 
Our experiments show that an implementation of our algorithm is able to learn 
adequate inductive invariants and contracts efficiently for a variety of sequential and concurrent
programs.
\end{abstract}

%---------- Introduction ----------
% !Tex root=main.tex

\section{Introduction}
\label{sec:intro}

Synthesizing inductive invariants, including loop invariants, pre/post 
contracts for functions, and rely-guarantee contracts for concurrent programs,
is one of the most important problems in program verification. In deductive
verification, this is often done by the verification engineer, and automating
invariant synthesis can significantly reduce the burden of building verified software,
allowing the
engineer to focus on the more complex specification and design
aspects of the code. In completely automated verification, where verification is typically
deployed for simple specifications, checking the validity of verification conditions of
recursion-free code has been mostly automated using constraint logic solving, and coupled
with invariant synthesis gives completely automated verification.

There are several techniques for finding inductive invariants, including
abstract interpretation~\cite{DBLP:conf/popl/CousotC77}, predicate abstraction~\cite{DBLP:conf/pldi/BallMMR01}, interpolation~\cite{mcmillan03,mcmillan_tacas06}, and IC3~\cite{DBLP:conf/vmcai/Bradley11}. These
techniques are typically \emph{white-box} techniques that carefully examine the program,
evaluating it symbolically or extracting unsatisfiable cores from proofs of unreachability
of error states in a bounded number of steps in order to synthesize
an inductive invariant
that can prove the program correct.

A new class of \emph{black-box} techniques based on learning has emerged in recent years
to synthesize inductive invariants~\cite{LMN14,GNMR16}. In this technique, there are two distinct agents, the Learner
and the Teacher. In each round the Learner proposes an invariant for the program, and the Teacher,
with access to a verification engine, checks whether the invariant
proves the program correct. If not, it synthesizes \emph{concrete} counterexamples that witness
why the invariant is inadequate and sends it back to the learner. The learner takes all such \emph{samples}
the teacher has given in all the rounds to synthesize the next proposal for the invariant.
The salient difference in the black-box approach is that the Learner synthesizes invariants from
concrete sample configurations of the program, and is otherwise oblivious to the program or its semantics.

It is tempting to think that the learner can learn invariants using positively and negatively labeled configurations,
similar to machine learning. However, Garg~et~al~\cite{LMN14} argued that we need a richer notion of samples for 
robust learning of inductive invariants. Let us recall this simple argument.

Consider a system with variables $\vec{x}$, with initial states captured by a predicate $\textit{Init}(\vec{x})$,
and a transition relation captured by a predicate $\textit{Trans}(\vec{x}, \vec{x}')$, and assume we want
to prove that the system does not reach a set of bad/unsafe states captured by the predicate $\textit{Bad}(\vec{x})$.
An inductive invariant $I(\vec{s})$ that proves this property needs to satisfy three constraints:
\begin{itemize}
	\item $\forall \vec{x}. \textit{Init}(\vec{x}) \Rightarrow I(\vec{x})$;
	\item $\forall \vec{x}. ~\neg(I(\vec{x}) \wedge \textit{Bad}(\vec{x}))$; and
	\item $\forall \vec{x}, \vec{x}'. I(\vec{x}) \wedge \textit{Trans}(\vec{x}, \vec{x}') \Rightarrow I(\vec{x}')$.
\end{itemize}

When a proposed invariant fails to satisfy the first two conditions, the verification engine can indeed come
up with configurations labeled positive and negative to indicate ways to correct the invariant. However, when
the third property above fails, it cannot come up with a single configuration labeled positive/negative; and
the most natural counterexample is a \emph{pair} of configurations $c$ and $c'$, with the instruction to the
learner that if $I(c)$ holds, then $I(c')$ must also hold. These are called \emph{implication counterexamples}
and the ICE (Implication Counter-Example) learning framework developed by Garg~et~al.\ is a robust learning
framework for synthesizing invariants~\cite{LMN14}. 
Garg~et~al.\ define several learning algorithms for learning invariant synthesis, in particular learning algorithms based on decision trees that can learn Boolean combinations of Boolean predicates and inqualities that compare numerical predicates to arbitrary thresholds~\cite{GNMR16}. 

Despite the argument above, it turns out that implication counterexamples are \emph{not sufficient} for learning
invariants in program verification settings. This is because reasoning in program verification is more stylized,
to deal \emph{compositionally} with the program. In particular, programs with function calls and/or
concurrency are not amenable to the above form of reasoning. In fact, it turns out that most reasoning
in program verification can be expressed in terms of \emph{Horn clauses}, where the Horn clauses have some formulas
that need to be synthesized.

For example, consider the imperative program snippet:
\[ I_\textit{pre}(\vec{x},y)~~ S(\text{mod}~\vec{x});~ y:=\textit{foo}(\vec{x}); ~ I_\textit{post}(\vec{x},y) \]
that we want to show correct, where $S$ is some straight-line program that modifies $\vec{x}$,
$I_\textit{pre}$ and $I_\textit{post}$ are some annotation (like the contract of a function we are synthesizing.
Assume that we are synthesizing the contract for $\textit{foo}$ as well,  
and assume the post-condition for $\textit{foo}$ is $PostFoo(res, \vec{x})$,
where $res$ denotes the result it returns.
Then the verification condition that we want to be valid is
\[ \left( I_\textit{pre}(\vec{x}, y) \wedge \textit{Trans}_S(\vec{x}, \vec{x}') \wedge \textit{PostFoo}(y', \vec{x}') \right) \Rightarrow I_\textit{post}(\vec{x}', y'), \]
where $\textit{Trans}_S$ captures the logical semantics of the snippet $S$ in terms
of how it affects the post-state of $\vec{x}$.

In the above, all three of the predicates $I_\textit{pre}$, $I_\textit{post}$ and $\textit{PostFoo}$ need to be synthesized.
When a learner proposes concrete predicates for these, the verifier checking the above logical
formula may find it to be invalid, and find concrete valuations $v_{\vec{x}}, v_y, v_{\vec{x}'}, v_{y'}$ 
for $\vec{x}, y, \vec{x}', y'$ that
makes the above implication false. However, notice that the above cannot be formulated as
a simple implication constraint. The most natural constraint to return to the learner is
\[ \left( I_\textit{pre}(v_{\vec{x}}, v_{\vec{v_y}})  \wedge \textit{PostFoo}(v_{y'}, v_{\vec{x}'}) \right) \Rightarrow I_\textit{post}(v_{\vec{x}'}, v_{y'}), \]
asking the learner to meet this requirement when coming up with predicates
in the future. The above is best seen as a \emph{Horn} Implication CounterExample (Horn-ICE).

The primary goal of this paper is to build Horn-ICE (Horn implication counterexample) learners for
learning predicates that facilitate inductive invariant and contract synthesis for proving safety
properties of programs. It has been observed in the literature that most program verification mechanisms
can be stated in terms of proof rules that resemble Horn clauses~\cite{DBLP:conf/pldi/GrebenshchikovLPR12}; in fact, the formalism 
of \emph{constrained Horn clauses} has emerged as a robust general mechanism for capturing program
verification problems in logic~\cite{DBLP:conf/cav/GurfinkelKKN15}. Consequently, whenever a Horn clause fails, it results in a Horn-ICE
sample that can be communicated to the learner, making Horn-ICE learners a much more general
mechanism for synthesizing invariants and contracts.

Our main technical contribution is to devise a decision-tree based Horn-ICE algorithm.
Given a set of (Boolean) predicates over configurations of programs and numerical functions 
that map configurations to integers, the goal of the learning algorithm is to synthesize 
predicates that are \emph{arbitrary} Boolean combinations of the Boolean predicates
and atomic predicates of the form $n \leq c$, where $n$ denotes a numerical function,
where $c$ is arbitrary. The classical decision-tree learning algorithm learns such
predicates from samples labeled $+/-$ only~\cite{Quinlan86}, and the work by Garg~et~al extends
decision-tree learning to learning from ICE-samples~\cite{GNMR16}. In this work, we extend
the latter algorithm to one that learns from Horn-ICE samples.

Extending decision-tree learning to handle Horn samples turns out to be non-trivial.
When a decision tree algorithm reaches a node that it decides to make a leaf and label it TRUE,
in the ICE-learning setting it can simply \emph{propagate} the constraints across the implication
constraints. However, it turns out that for Horn constraints,
this is much harder. Assume there is a single invariant we are synthesizing and
we have a Horn sample $(s_1 \wedge s_2) \Rightarrow s'$ and we decide
to label $s'$ false when building the decision tree. Then we must later turn \emph{at
least one of $s_1$ and $s_2$ to false}. This choice makes the algorithms and propagation
much more complex, and ensuring that the decision tree algorithm will always construct
a correct decision tree (if one exists) and work in polynomial time becomes much harder. 
Furthermore, statistical measures based on entropy for choosing attributes (to split each node) get more complicated 
as we have to decide on a more complex logical space of Horn constraints between samples.

The contributions of this paper are the following:
\begin{enumerate}
	\item A robust decision-tree learning algorithm that learns using Horn implication counterexamples, runs in polynomial time (in the number of samples) and has a bias towards learning smaller trees (expressions) using statistical measures for choosing attributes. The algorithm also guarantees that a decision-tree
          consistent with all samples is created, provided there exists one.
	\item We show that we can use our learning algorithm to learn over an \emph{infinite} countable set of predicates ${\mathcal P}$, and we can ensure learning is complete (i.e., that will find an invariant if one is expressible using the predicates ${\mathcal P}$).
	\item An implementation of our algorithm, extending the classical IC3 decision-tree algorithm, and an automated verification tool built with our algorithm for synthesizing invariants. We evaluate our algorithm for finding loop invariants and summaries for sequential programs and also Rely-Guarantee contracts in concurrent programs.
\end{enumerate}

The paper is structured as follows. In Section~\ref{sec:overview} we present an overview of Horn ICE invariant synthesis; in Section~\ref{sec:learner}, we describe the decision tree based algorithm for learning invariant formulas from Horn ICE samples; in Section~\ref{sec:horn-solver}, we describe the algorithm that propagates the data point classifications across Horn constraints; we describe the node/attribute selection strategies used in the decision tree based learning algorithm in Section~\ref{sec:attribute-selection} and the experimental evaluation in Section~\ref{sec:experiments}.

%---------- Related Work ----------
% !Tex root=main.tex

\subsection*{Related Work}
Invariant synthesis is the central problem in automated program verification and, over the years, several techniques have been proposed for synthesizing invariants, including abstract interpretation~\cite{DBLP:conf/popl/CousotC77}, interpolation~\cite{mcmillan03,mcmillan_tacas06}, IC3 and PDR~\cite{DBLP:conf/vmcai/Bradley11,DBLP:conf/cav/KarbyshevBIRS15}, predicate abstraction~\cite{DBLP:conf/pldi/BallMMR01}, abductive inference~\cite{DBLP:conf/oopsla/DilligDLM13}, as well as synthesis algorithms that rely on constraint solving~\cite{DBLP:conf/pldi/gulwani08,invgen,DBLP:conf/cav/ColonSS03}. 
Subsequent to~\cite{DBLP:conf/pldi/GrebenshchikovLPR12}, there has been a lot of work towards Horn-clause solving~\cite{BjornerMR13,DBLP:conf/cav/BeyenePR13}, using a combination of these techniques.
Complementing these techniques are data driven invariant synthesis techniques, the first ones to be proposed being Daikon~\cite{DBLP:conf/icse/ErnstCGN00} that learns \emph{likely} program invariants and Houdini~\cite{DBLP:conf/fm/FlanaganL01} that learns conjunctive inductive invariants.
Data-driven invariant synthesis has seen renewed interest, lately~\cite{DBLP:conf/cav/SharmaNA12,DBLP:conf/esop/0001GHALN13,DBLP:conf/sas/0001GHAN13,DBLP:conf/cav/0001LMN13,LMN14,DBLP:conf/cav/0001A14,Zhu:2015:LRT:2784731.2784766,DBLP:conf/kbse/PavlinovicLS16,DBLP:conf/pldi/PadhiSM16,DBLP:conf/icse/NguyenKWF12,GNMR16,Zhu:2016:ALS,DBLP:conf/sas/BrockschmidtCKK17,FKB17}.
When the program manipulates complex data-structures, arrays, pointers, etc., or when one needs to reason over a complicated memory model and its semantics, 
the invariant for the correctness of the program might still be simple. In such a scenario, a black-box, data-driven \emph{guess and check} approach, guided by a finite set of program configurations, has been shown to be advantageous.
However, implication counter-examples proposed by Garg et al.~\cite{LMN14} are not sufficient for learning invariants in general program verification settings. 
Grebenshchikov et al.~\cite{DBLP:conf/pldi/GrebenshchikovLPR12} have shown that most reasoning in program verification is expressed in terms of Horn clauses.
Subsequent to~\cite{DBLP:conf/pldi/GrebenshchikovLPR12}, there has been a lot of work towards Horn-clause solving~\cite{BjornerMR13,DBLP:conf/cav/BeyenePR13}. 
SeaHorn~\cite{DBLP:conf/cav/GurfinkelKKN15} is a verification framework that translates verification conditions of a program to horn clauses that can be solved using several backend solvers.
In the context of data-driven invariant synthesis, 
our work generalizes the ICE learning model~\cite{LMN14} to Horn counter-examples, and we build a decision-tree based Horn-ICE learner for learning invariant annotations in this model.

%---------- Overview ----------
% !Tex root=main.tex

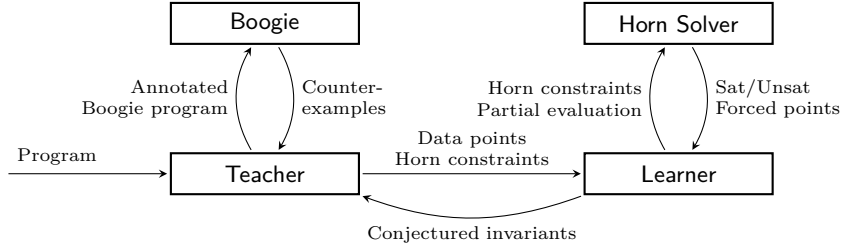
\begin{figure}[t!h]
	\centering
	
	\begin{tikzpicture}
		\begin{scope}[font=\sffamily]
			\node[draw, minimum width=25mm, minimum height=4ex, thick] (teacher) at (0, 0) {Teacher};
			\node[draw, minimum width=25mm, minimum height=4ex, thick] (boogie) at (0, 2) {Boogie};
			\node[draw, minimum width=25mm, minimum height=4ex, thick] (learner) at (5.5, 0) {Learner};
			\node[draw, minimum width=25mm, minimum height=4ex, thick] (solver) at (5.5, 2) {Horn Solver};
		\end{scope}
		
		\draw[->, shorten <=1pt] (teacher) edge[bend left] node[text width=20mm, font=\scriptsize, align=right] {Annotated \\ Boogie program} (boogie);
		\draw[->, shorten <=1pt] (boogie) edge[bend left]  node[text width=20mm, font=\scriptsize, align=left] {Counter- \\ examples} (teacher);
		\draw[->, shorten <=1pt] (learner) edge[bend left] node[text width=30mm, font=\scriptsize, align=right] {Horn constraints \\ Partial evaluation} (solver);
		\draw[->, shorten <=1pt] (solver) edge[bend left]  node[text width=20mm, font=\scriptsize, align=left] {Sat/Unsat \\ Forced points} (learner);
		\draw[->, shorten <=1pt] (teacher) edge[]  node[text width=30mm, font=\scriptsize, align=center] {Data points \\ Horn constraints} (learner);
		\draw[shorten <=1pt] (learner.south west) edge[->, bend left=20]  node[text width=50mm, font=\scriptsize, align=center] {Conjectured invariants} (teacher.south east);
		
		\node[font=\scriptsize, anchor=south] (program) at ([xshift=-15mm] teacher.west) {Program};
		\draw (program.south west) edge[->] (teacher.west);
	\end{tikzpicture}
	
  	\caption{Architecture of the Horn-ICE invariant synthesis tool} \label{fig:horn-ice-arch}
\end{figure}

\section{Overview}
\label{sec:overview}

We give an overview of the Horn-ICE invariant synthesis framework in this section.
Fig.~\ref{fig:horn-ice-arch} shows the main components of our Horn-ICE invariant synthesis framework.
The Teacher is given a program along with a specification, and based on the style of proof the teacher wants to carry out, she determines the
kind of invariants needed (a name for each invariant, and the set of
variables or terms it may mention) and the corresponding verification
conditions they must satisfy.
The Learner conjectures a concrete invariant for each invariant name, 
and communicates these to the Teacher.
The Teacher plugs in these conjectured invariants in the program and
asks a verification engine (in this case Boogie, but we could use any suitable program verifier) to check if the conjectured
invariants suffice to prove the specification.
If not, Boogie will return a counter-example showing why the
conjectured invariants do not constitute a valid proof.
The Teacher passes these counter-examples to the Learner.
The Learner now learns a new invariants that are consistent
with the set of counter-examples given by the Teacher so far.
The Learner frequently invokes the Horn Solver and Horn propagation engine to guide it
in the process of building concrete invariants that are consistent
with the set of counter-examples given by the Teacher. The Teacher and
Learner go through a number of such rounds, until the Teacher
finds that the concrete invariants supplied by the Learner constitute
a correct proof. 

We illustrate the working of the tool with an example.
Fig.~\ref{fig:prog-text} shows a concurrent program adapted from
\cite{Mine14}, with two threads $T_1$ and $T_2$ that access two shared
variables \texttt{x} and \texttt{y}.
The precondition says that initially $x = y = 0$.
The postcondition asserts that when the two threads terminate, the
state satisfies $x \leq y$.
Let us say the Teacher is interested in proving this specification using
Rely-Guarantee style reasoning \cite{Jones83b,XuRH97}.
In this proof technique we need to come up with invariants associated
with the program points $P0$--$P4$ in $T_1$ and $Q0$--$Q4$ in $T_2$ (as in
the Floyd-Hoare style proofs of sequential programs), as well as
two ``two-state'' invariants $G1$ and $G2$, which act as the
``guarantee'' on the interferences caused by each thread respectively.
Symmetrically, $G2$ and $G1$ can be thought of as the conditions that
the threads $T_1$ and $T_2$ can ``rely'' on, respectively, about the 
interferences from the other thread.
Fig.~\ref{fig:prog-chc} shows a partial list of the verification
conditions that the named invariants need to satisfy, in order to
constitute a valid Rely-Guarantee proof of the program.
The VCs are grouped into four categories: ``Adequacy'' and
``Inductiveness'' are similar to the requirements for sequential
programs, while ``Guarantee'' requires that the atomic statements of
each thread satisfy its promised guarantee, and ``Stability'' requires
that the invariants at each point are stable under interferences from
the other thread.
We use the notation ``$\den{x:=x+1}$'' to denote the two-state (or
``before-after'') predicate describing the semantics of the statement
``\texttt{x:=x+1}'', which in this case is the predicate 
$x' = x + 1 \And y' = y$. The guarantee invariants $G1$ and $G2$ are
similar predicates over the variables $x,y,x',y'$, describing the
possible state changes that an atomic statement in a thread can
effect.
The invariants $P0$--$P4$ and $Q0$--$Q4$ are
predicates over the
variables $x,y$. We use the notation $P0'$ to denote the predicate
$P0$ applied to the variables $x',y'$.

\begin{figure}[th]
	\centering

	\begin{subfigure}[b]{.325\linewidth}
		\begin{scriptsize}
		\begin{verbatim}
         Pre: x = y = 0                  
                                           
    T1         ||          T2              
                                           
P0  while (*) {        Q0  while (*) {     
P1    if (x < y)       Q1    if (y < 10)   
P2      x := x + 1;    Q2      y := y + 3  
P3  }                  Q3  }               
P4                     Q4                      

           Post: x <= y                    
		\end{verbatim}
	\end{scriptsize}
	\caption{The program} \label{fig:prog-text}
	\end{subfigure}%
\hskip 2cm
\begin{subfigure}[b]{.475\linewidth}
  \centering

  \begin{scriptsize}

  \begin{tabular}{|ll|ll|} \hline 
    & Adequacy & & Inductiveness \\ \hline
  1.& $(x = 0 \And y = 0) \Implies P0$ & 1. & $P0 \Implies P1 \And P4$ \\
  2.& $P4 \And Q4 \Implies (x \leq y)$ & 2. & $P1 \And (x < y) \Implies P2$ \\
    &                              & 3. & $P2 \And \den{x := x + 1} \Implies P3'$ \\
    &                              & 4. & $P3 \Implies P0$ \\ 
    &                              &    & $\cdots$ \\ \hline
    & Stability  & & Guarantee \\ \hline
  1.& $P0 \And G2 \Implies P0'$ & 1.& $P2 \And \den{x := x + 1} \Implies G1$  \\ 
  2.& $P1 \And G2 \Implies P1'$ & 2.& $Q2 \And \den{y := y + 3} \Implies G2$ \\ 
    & $\cdots$                  &   & \\ \hline
  \end{tabular}
  \end{scriptsize}
	\caption{The verification conditions}\label{fig:prog-chc} 
\end{subfigure}

\caption{A concurrent program and the corresponding verification conditions for a Rely-Guarantee proof}\label{fig:1}
\end{figure}

The Teacher asks the Learner to synthesize the invariants $P0$--$P4$ and
$Q0$--$Q4$ over the variables $x,y$, and $G1$ and $G2$ over the
variables $x,y,x',y'$.
As a first cut the Learner conjectures ``$\true$'' for all these
invariants.
The Teacher encodes the VCs in Fig.~\ref{fig:prog-chc} as annotated
procedures in Boogie's programming language, plugs in $\true$ for
each invariant, and asks Boogie if the annotated program verifies.
Boogie comes back saying that the ensures clause corresponding to VC
Adequacy-1 may fail, and gives a counter-example say $\langle x
\mapsto 2, y \mapsto 1 \rangle$ which satisfies $P4$ and $Q4$, but
does not satisfy $x \leq y$.
The Teacher conveys this counter-example as a Horn sample
$d_1 \And d_2 \Implies \false$ to the Learner, where $d_1$ is the data
point $\langle P4,2,1\rangle$ and $d_2$ is the data
point $\langle Q4,2,1\rangle$.
We use the convention that the data points are vectors in which the
first component is the value of a 
``location'' variable ``$l$'' which takes one of the values ``$P0$'',
``$P1$'', etc, while the second and third components are values of $x$
and $y$ respectively.
This Horn constraint is represented graphically in the bottom of
Fig.~\ref{fig:prog-hc}.

To focus on the technique used by the Learner, which is the heart of
this work, let us pan to several rounds ahead, where the Learner has
accumulated a set of counter-examples given by the Teacher, as shown
in Fig.~\ref{fig:prog-hc}.
The Learner's goal is simply to find a small (small in expression size)
 invariant $\varphi$ (from a
finite class of 
formulas comprising Boolean combinations of some base predicates), that
is \emph{consistent} with the given set of Horn constraints. By
``consistent'', we mean that for each Horn constraint of the form $d_1
\And \cdots \And d_k \Implies d$, whenever each of
$d_1, \ldots, d_k$ satisfy $\varphi$, it is the case that $d$ also
satisfies $\varphi$.

Our Learner uses a decision tree based learning technique. Here the
internal nodes of the decision tree are labelled by the base
predicates (or ``attributes'') and leaf-nodes are classified as
``True'', ``False'', or ``?'' (for ``Unclassified'').
Each leaf node in a decision tree represents a logical formula which
is the conjunction of the node labels along the
path from the root to the leaf node, and the whole tree represents the
disjunction of the formulas corresponding to the leaf nodes labelled
``True''.
The Learner builds a decision tree for a given set of Horn constraints
incrementally, starting from the root node.
Each leaf node in the tree has a corresponding subset of the data-points
associated with it, namely the set of points that satisfy the formula
associated with that node.
In each step the Learner can choose to mark a node as ``True'', or
``False'', or to split a node
with a chosen attribute and create two child nodes associated with
it.

Before marking a node as ``True'' or ``False'' the Learner would like
to make sure that this ``preserves'' the consistency of the decision tree
with respect to the set of Horn constraints.
For this he calls the Horn Solver/Propagation component, which reports
whether the proposed extension of the partial valuation is
indeed consistent with the given set of Horn constraints, and if so
which are the data-points which are ``forced'' to be true or false.
For example, let us say the Learner has constructed the partial
decision tree shown in Fig.~\ref{fig:prog-dec-tree}, where node $n_4$
has already been set to ``True'' and nodes $n_2$ and $n_5$ are
unclassified.
He now asks the Horn Solver if it is okay for him to turn node $n_2$
``True'', to which the Horn Solver replies ``Yes'' since this extended
valuation would still be consistent with the set of Horn constraints
in Fig.~\ref{fig:prog-hc}.
The Horn Solver also tells him that the extension would force the
data-points $d_{12}$, $d_8$, $d_7$, $d_5$, $d_4$, $d_3$, $d_1$ to be
$\true$, and the point $d_2$ to $\false$.

The Learner uses this information to go ahead and set $n_2$ to
``True'', and also to make note of the fact that $n_5$ is now a ``mixed''
node with some points that are forced to be $\true$ (like $d_1$) and
some $\false$ (like $d_2$).
Based on this information, the Learner may choose to split node $n_5$ 
next.
After completing the decision tree, the Learner may send the
conjecture in which $P0$--$P4$, $G1$ and $G2$ are set to $\true$, and
$Q1$--$Q4$ are set to $\false$.
The Teacher sends back another counter-example to this conjecture, and
the exchanges continue for several rounds.
Finally, our Learner eventually makes a successful conjecture like: $x
\leq y$ for $P0$, $P1$, $P3$, $P4$, and $Q0$--$Q4$; $x < y$ for $P2$;
$y = y' \And x' \leq y'$ for $G1$; and $x = x' \And y \leq y'$ for $G2$.

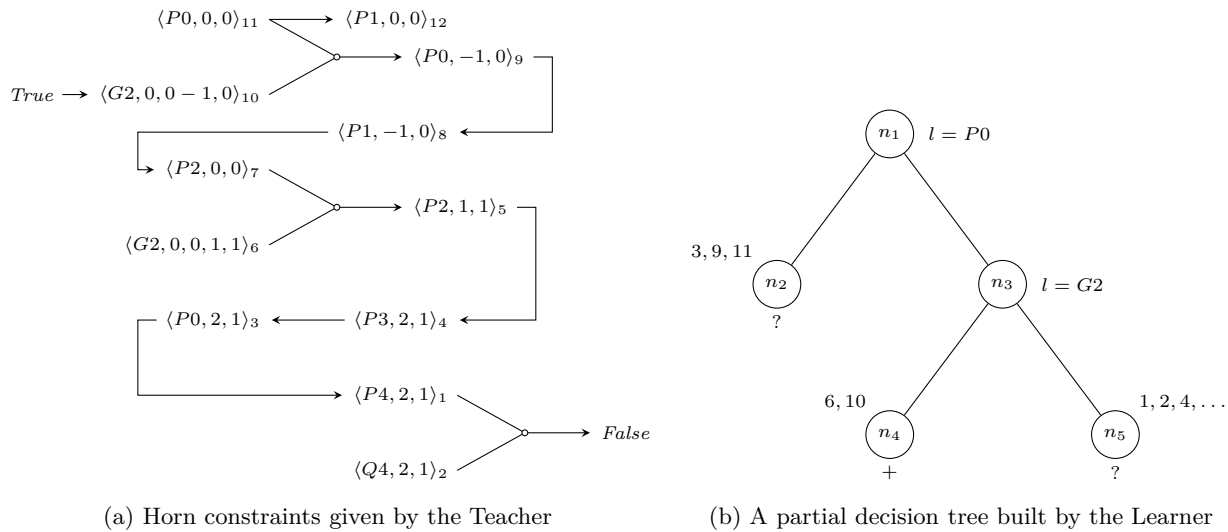
\begin{figure}[t]
	\centering

	\begin{subfigure}[b]{.53\textwidth}

		\begin{tikzpicture}[font=\scriptsize]

			% True
			% \node[anchor=east] (true) at (-2.25, 0) {$\mathit{True}$};
			\node[anchor=east] (true) at (-2.75, -1) {$\mathit{True}$};
			% 1. row
			\node[anchor=east] (0) at (0, 0) {$\langle P0, 0, 0 \rangle_{11}$};
			\node[anchor=east] (1) at (0, -1) {$\langle G2, 0, 0 -1, 0 \rangle_{10}$};
			\node[anchor=east] (2) at (0, -2) {$\langle P2, 0, 0 \rangle_{7}$};
			\node[anchor=east] (3) at (0, -3) {$\langle G2, 0, 0, 1, 1 \rangle_{6}$};
			\node[anchor=east] (4) at (0, -4) {$\langle P0, 2, 1 \rangle_{3}$};
			% 2. row
			\node[anchor=east] (5) at (2.5, 0) {$\langle P1, 0, 0 \rangle_{12}$};
			\node[anchor=east] (6) at (2.5, -1.5) {$\langle P1, -1, 0 \rangle_{8}$};
			\node[anchor=east] (7) at (2.5, -4) {$\langle P3, 2, 1 \rangle_{4}$};
			\node[anchor=east] (8) at (2.5, -5) {$\langle P4, 2, 1 \rangle_{1}$};
			\node[anchor=east] (9) at (2.5, -6) {$\langle Q4, 2, 1 \rangle_{2}$};

			% Arrows
			\draw[->] (0) edge (5);
  		
			\draw[->] (0.east) -- ++(.9, -.5) coordinate (A) -- ++(.9, 0) node[anchor=west] (10) {$\langle P0, -1, 0 \rangle_{9}$};
			\draw (1.east) -- (A);
			\draw[fill=white] (A) circle (1.1pt);
		
		  	\draw[->] (10.east) -| ++(.25, -1) -- (6.east);

			\draw[<-, shorten <=1pt, shorten >= 0pt] (2.west) -| ++(-.25, .5) -- (6.west);
		
			\draw[->] (2.east) -- ++(.9, -.5) coordinate (B) -- ++(.9, 0) node[anchor=west] (11) {$\langle P2, 1, 1 \rangle_{5}$};
			\draw (3.east) -- (B);
			\draw[fill=white] (B) circle (1.1pt);
		
			\draw[->] (11.east) -| ++(.25, -1.5) -- (7.east);
	
			\draw[->] (7) edge (4);
	
			\draw[->] (4.west) -| ++ (-.25, -1) -- (8.west);
		
			\draw[->] (8.east) -- ++(.9, -.5) coordinate (C) -- ++(.9, 0) node[anchor=west] (false) {$\mathit{False}$};
			\draw (9.east) -- (C);
			\draw[fill=white] (C) circle (1.1pt);
		
			\draw[->] (true.east) -- (1.west);

		\end{tikzpicture}
		\caption{Horn constraints given by the Teacher} \label{fig:prog-hc} 
	\end{subfigure}
	\hfill
	\begin{subfigure}[b]{.45\textwidth}
		\begin{tikzpicture}[font=\scriptsize]
			\node[draw, shape=circle] (1) at (0, 0) {$n_1$};
			\node[draw, shape=circle] (2) at (-1.5, -2) {$n_2$};
			\node[draw, shape=circle] (3) at (1.5, -2) {$n_3$};
			\node[draw, shape=circle] (4) at (0, -4) {$n_4$};
			\node[draw, shape=circle] (5) at (3, -4) {$n_5$};
	
			\node[right=4mm] at (1) {$l = P0$};
			\node[right=4mm] at (3) {$l = G2$};

			\node[above left=2mm] at (2) {$3,9,11$};
			\node[below=3mm] at (2) {$?$};
	
			\node[above left=2mm] at (4) {$6,10$};
			\node[below=3mm] at (4) {$+$};
	
			\node[above right=2mm] at (5) {$1,2,4, \ldots$};
			\node[below=3mm] at (5) {$?$};
	
			\draw[shorten >=0pt] (1) -- (2) (1) -- (3) (3) -- (4) (3) -- (5);
	
		\end{tikzpicture}
		\caption{A partial decision tree built by the Learner} \label{fig:prog-dec-tree} 
	\end{subfigure}

\caption{Intermediate results of our Horn-ICE framework on the introductory example.} \label{fig:intermediate-results-example}
\end{figure}

The above example illustrates some of the key differences and challenges
from the classical ICE framework in \cite{LMN14,GNMR16}.
To begin with, the Learner needs to synthesize \emph{multiple}
invariants (in this case 12 different invariants) rather than a single
invariant; adapting decision-tree learning to learn multiple formulas
efficiently is challenging.
Secondly, we note that the Teacher \emph{needs} to use \emph{Horn}
counter-examples rather than simple implications, as illustrated by
the first counter-example $d_1 \And d_2 \Implies \false$, and
subsequently $d_6 \And d_7 \Implies d_5$ and $d_{10} \And d_{11}
\Implies d_{9}$.
Accordingly, the Learner's decision tree based algorithm needs to be
adapted to work robustly in the presence of Horn constraints.

%%% Local Variables:
%%% mode: latex
%%% TeX-master: "main.tex"
%%% End:

%---------- Learner ----------
% !Tex root=main.tex

\section{Decision Tree Learning with Horn Constraints}
\label{sec:learner}

%---------- Valuations and Horn constraints ----------
\subsubsection*{Valuations and Horn constraints}
We will consider propositional formulas over a fixed set of
propositional variables $X$, using the usual
Boolean connectives $\notof$, $\And$, $\Or$, $\Implies$, etc.
The \emph{data points} we introduce later will also play the role of
propositional variables.
A \emph{valuation} for $X$ is a map $v : X \rightarrow \{\true,
\false\}$.
A given formula over $X$ evaluates, in the standard way, to
either $\true$ or $\false$ under a given valuation.
We say a valuation $v$ \emph{satisfies} a 
formula $\varphi$ over $X$, written $v \models \varphi$, if
$\varphi$ evaluates to $\true$ under $v$.

A \emph{partial} valuation for $X$ is a partial map
$u: X \pmap \{\true,\false\}$.
We denote by $\dom_{\true}(u)$  the set $\{x \in X \ | \ u(x) =
\true\}$ and $\dom_{\false}(u)$  the set $\{x \in X \ | \ u(x) =
\false\}$.
We say a partial valuation $u$ is \emph{consistent} with a formula
$\varphi$ over $X$, if there exists a full valuation $v$ for $X$,
which extends $u$ (in that for each $x \in X$, $u(x) = v(x)$
whenever $u$ is defined on $x$), and $v \models \varphi$.

Let $\varphi$ be a formula over $X$, and $u$ a partial valuation
over $X$ which is consistent with $\varphi$.
We say a variable $x \in X$ is \emph{forced to be true}, with
respect to $\varphi$ and $u$, if for all valuations $v$ which extend $u$,
whenever $v\models\varphi$ we have $v(x) = \true$.
Similarly we say $x$ is \emph{forced to be false}, with
respect to $\varphi$ and $u$, if for all valuations $v$ which extend $u$,
whenever $v\models\varphi$ we have $v(x) = \false$.
We denote the set of variables forced true (wrt $\varphi$ and $u$)
by $\forcedt{\varphi}{u}$, and those forced false by
$\forcedf{\varphi}{u}$.

Finally, for a partial valuation $u$ over $X$, and
subsets $T$ and $F$ of $X$, which are disjoint from each other and
from the domain of $u$, we denote by $u^T_F$ the partial valuation
which extends $u$ by mapping all variables in $T$ to $\true$, and
all variables in $F$ to $\false$.

A \emph{Horn clause} (or a \emph{Horn
  constraint}) over $X$ is disjunction of literals over $X$ with \emph{at
most} one positive literal.
Without loss of generality, we will write Horn clauses in one of the
three forms:
\begin{enumerate*}[label={(\arabic*)}]
	\item $\true \Implies x$,
	\item $(x_1 \And \cdots \And x_k) \Implies \false$, or
	\item $(x_1 \And \cdots \And x_l) \Implies y$,
\end{enumerate*}
where $l \geq 1$ and each of the $x_i$'s and $y$ belong to $X$.

%---------- Data points ----------
\subsubsection*{Data Points} 
Our decision tree learning algorithm is paired with a teacher that
refutes incorrect conjectures with positive data points, negative
data points, and, more generally, with Horn constraints
over data points. Roughly speaking, a data
point corresponds to a program configuration and contains the values
of each program variable and potentially values that are derived from
the program variables, such as $x+y$, $x^2$, or
$\mathit{is\_list}(z)$. For the sake of a simpler presentation,
however, we assume that a data point is an element $d \in \mathbb D$
of some (potentially infinite) abstract domain of data points $\mathbb D$ (we
encourage the reader to think of programs over integers where data
points correspond to vectors of integers).

%---------- Base Predicates and Decision Trees ----------
\subsubsection*{Base Predicates and Decision Trees}
The aim of our learning algorithm is to construct a decision tree
representing a Boolean combination of some base predicates.
We assume a set of base predicates, each of which evaluates to $\true$
or $\false$ on a data point $d \in D$.
More precisely, a \emph{decision tree} is a binary
tree $\T$ whose nodes either have two children (\emph{internal nodes}) or no
children (\emph{leaf nodes}), whose internal nodes are labeled with base
predicates, and whose leaf nodes are labeled with $\true$ or
$\false$. The formula $\psi_{\T}$ corresponding to a decision tree
$\T$ is defined to be $\bigvee_{\pi \in \Pi_\true} \bigl( \bigwedge_{\rho \in \pi} \rho \bigr)$ 
where $\Pi_\true$ is the set of all paths from the root of $\T$
to a leaf node labeled $\true$, and $\rho \in \pi$
denotes that the base predicate $\rho$ occurs as a label of a node on
the path $\pi$.
Given a set of data points $X \subseteq D$, a decision tree $\T$
induces a valuation $v_{\T}$ for $X$ given by $v_{\T}(d) = \true$ iff
$d \models \psi_{\T}$.
Finally, given a set of Horn constraints $C$ over a set of data
points $X$, we say a decision tree $\T$ is \emph{consistent} with $C$
if $v_{\T} \models \bigwedge C$.
We will also deal with ``partial'' decision trees, where some of the
leaf nodes are yet unlabeled, and we define a partial valuation
$u_{\T}$ corresponding to such a tree $\T$, and the notion of
consistency, in the expected way.

%---------- Horn Samples ----------
\subsubsection*{Horn Samples}
In the traditional setting, the learning algorithm collects the
information returned by the
teacher as a set of \emph{samples}, comprising ``positive'' and
``negative'' data points.
In our case, the set of samples will take the form of a set of Horn
constraints $C$ over a finite set of data points $X$.
We note that a positive point $d$ (resp.\@ a negative point $e$) can be
represented as a Horn constraint $\true \Implies d$ (resp.\@ $e \Implies
\false$).
We call such a pair $(X,C)$ a \emph{Horn sample}.

In each iteration of the learning process, we require the learning
algorithm to construct a decision tree $\T$ that agrees with the
information in the current Horn sample $\S = (X, C)$, in
that $\T$ is consistent with $C$.
The learning task we address then is
\textit{``given a Horn sample $\S$, construct a decision tree
	consistent with $\S$''}.

\begin{algorithm}[t]
\caption{Decision Tree Learner for Horn Samples}\label{alg:decision-tree}
\begin{algorithmic}[1]
\Procedure{Decision-Tree-Horn}{}
\ \ \ \ \ \ \ \ \ \ \Require A Horn sample $(X,C)$ 
\ \ \ \ \ \ \ \ \ \ \Ensure {A decision tree $\T$ consistent with $C$,
  if one exists.}
\State Initialize tree $\T$ with root node $r$, with $r.dat \gets X$
\State Initialize partial valuation $u$ for $X$, with $u \gets \emptyset$
\While {($\exists$ an unlabelled leaf in $\T$)}
  \State $n \gets \Call{Select-Node}{\,}$
  \If {(pure($n$))}  \ \ // $n.dat \intersection \dom_{\true}(u) =
  \emptyset$ or $n.dat \intersection \dom_{\false}(u) = \emptyset$
    \State $result = \Call{Label}{n}$
  \EndIf
  \If {($\notof \mbox{pure}(n) \Or \notof result$)}
    \If {($n.dat$ is singleton)}
       \State {print ``Unable to construct decision tree''}; 
       \Return \label{line:dec-tree-unsuccess-1} 
    \EndIf
    \State $result \gets \Call{Split-Node}{n}$
    \If {($\notof result$)}
       \State {print ``Unable to construct decision tree''};
       \Return \label{line:dec-tree-unsuccess-2} 
    \EndIf
  \EndIf
\EndWhile
\State \Return $\T$ \ \ \ \ {// decision tree constructed
  successfully} \label{line:dec-tree-success}
\EndProcedure
\end{algorithmic}
\begin{minipage}[t]{6.5cm}
\begin{algorithmic}[1]
\Procedure {Label}{node $n$}
\State $Y \gets n.dat \setminus dom(u)$;
\If {($n.dat \intersection \dom_{\true}(u) \neq \emptyset$)}  
  \Statex \ \ \ \ \ \ \ \ \ // $n$ contains only pos/unsigned pts
  \State $(res,T,F) \gets \Call{Horn-Solve}{u, Y, \emptyset}$
  \If {($res$)}
    \State $n.label \gets \true$
    \State $u \gets u^{Y \union T}_{F}$
    \State \Return $\true$
  \Else \ \Return $\false$;
  \EndIf
\Else \ \ldots\ \  // try to label neg
\EndIf
\EndProcedure
\end{algorithmic}
\end{minipage}
\begin{minipage}[t]{6cm}
\begin{algorithmic}[1]
\Procedure {Split-Node}{node $n$}
\State $(res, a) \gets \Call{Select-Attribute}{n}$
\If {($res$)}
  \State Create new nodes $l$ and $r$
  \State $l.dat \gets \{ d \in n.dat \ | \ d \models a \}$
  \State $r.dat \gets \{ d \in n.dat \ | \ d \not\models a \}$
  \State $n.\mathit{left} \gets l$, $n.\mathit{right} \gets r$.
  \State \Return $\true$
\Else 
  \State \Return $\false$
\EndIf
\EndProcedure
\end{algorithmic}
\end{minipage}
\end{algorithm}

%---------- The Learning Algorithm ----------
\subsubsection*{The Learning Algorithm}
Our learning algorithm, shown in Algo.~\ref{alg:decision-tree},
is an extension of Garg et al.'s \cite{GNMR16}
learning algorithm, which in turn is based on the classical decision
tree learning algorithm of Quinlan \cite{Quinlan86}.
Given a Horn sample $(X,C)$, the algorithm creates an initial
(partial) decision tree $\T$ which has a single unlabelled node, whose
associated set of data points is $X$.
As an auxiliary data structure, the learning algorithm maintains a
partial valuation 
$u$, which is always an extension of the partial valuation induced by
the decision tree.
In each step, the algorithm picks an unlabelled leaf node $n$, and
checks if it is ``pure'' in that all points in the node are either
positive (i.e.\@ \true) or unsigned, or similarly neg/unsigned.
If so, it calls the procedure \textsc{Label} which
tries to label it positive if all points
are positive or unsigned, or negative otherwise.
To label it positive, the procedure first checks whether extending the
current partial valuation $u$ by making all the unsigned data points
in $n$ true results in a valuation that is consistent with the given
set of Horn constraints $C$.
It does this by calling the procedure \textsc{Horn-Solve} (described
in the next section), which not only checks whether the proposed
extension is consistent with $C$, but if so, also returns the set of
points forced to be true and false, respectively.
In this case, the node $n$ is labelled $\true$ and the partial
valuation $u$ is extended with the forced values.
If the attempt to label positive fails, it tries to label the node
negative, in a similar way.
If both these fail, it ``splits'' the node using a suitably chosen
base predicate $a$. The corresponding method \textsc{Select-Attribute},
which aims to (heuristically) obtain a small tree (i.e., a concise formula),
is described in Sec.~\ref{sec:attribute-selection}.

The crucial property of our learning algorithm is that if the given
set of constraints $C$ is satisfiable and if the data points in $X$ are
``separable,'' it will always construct a decision tree
consistent with $C$. We say that the points in $X$ are
\emph{separable} if for every pair of points $d_1$ and $d_2$ in $X$ we
have a base predicate $\rho$ which distinguishes them (i.e., either
$d_1 \models \rho$ and $d_2 \not\models \rho$, or vice-versa). This result, together with its time complexity,
is formalized in Theorem~\ref{thm:dec-tree-succ}.

\begin{theorem}
\label{thm:dec-tree-succ}
Let $(X, C)$ be a Horn sample, $n = |X|$, and  $h = |C|$.
If the input set of points $X$ is separable and the input Horn
constraints $C$ are satisfiable, then Algo.~\ref{alg:decision-tree}
runs in time $\mathcal O(h\cdot n^3)$ and returns a decision tree that is consistent with the Horn sample
$(X,C)$.
\end{theorem}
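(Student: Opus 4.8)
**The plan is to prove correctness and termination separately, treating the algorithm as a loop that maintains an invariant relating the partial valuation $u$ to the Horn sample.**

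First I would establish the key loop invariant: at the start of each iteration of the \texttt{while} loop, the partial valuation $u$ is consistent with $C$, and moreover $u$ extends the partial valuation $u_\T$ induced by the current partial tree $\T$. Consistency of $u$ with $C$ is the crux — it is what guarantees that \textsc{Horn-Solve} can always ``make progress'' and that, whenever we label a leaf, we never destroy satisfiability. I would check that each of the three operations preserves this invariant. A \textsc{Split-Node} step does not touch $u$ at all, so the invariant is trivially preserved; it only refines the partition of data points among leaves. A successful \textsc{Label} step on node $n$ extends $u$ by setting the unsigned points $Y = n.dat \setminus \dom(u)$ to $\true$ (or $\false$), together with the forced points returned by \textsc{Horn-Solve}; by the contract of \textsc{Horn-Solve} (from the next section, which I may assume), $res$ is $\true$ exactly when $u^Y_\emptyset$ is consistent with $C$, and in that case the returned $T,F$ are precisely $\forcedt{\bigwedge C}{u^Y_\emptyset}$ and $\forcedf{\bigwedge C}{u^Y_\emptyset}$, so $u^{Y\cup T}_F$ is again consistent with $C$. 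The fact that $u$ remains an extension of $u_\T$ follows because we only set values that agree with the tree's future leaf labels — here I would argue that any point in $n.dat$ ends up at some $\true$-leaf (if we labelled $n$ positive) so assigning it $\true$ is forced by $\psi_\T$ for any completion respecting the current partial tree. One has to be a little careful: labelling $n$ positive while $u$ already maps some point of a sibling subtree to $\false$ is fine because the data-point partition is disjoint across leaves, which is exactly why \texttt{pure}$(n)$ is tested against $\dom_\true(u)$ and $\dom_\false(u)$.

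Next I would argue that the algorithm never prints ``Unable to construct decision tree'' when $C$ is satisfiable and $X$ is separable. There are two failure lines. The first (line~\ref{line:dec-tree-unsuccess-1}) fires when $n.dat$ is a singleton but $n$ is not pure or \textsc{Label} failed; but a singleton node is automatically pure (it cannot contain both a forced-true and a forced-false point), and for a singleton $\{d\}$, \textsc{Label} attempts to set $d$ to $\true$ and, if that is inconsistent with $C$ given $u$, to $\false$ — and since $u$ is consistent with $C$ by the loop invariant, at least one of these two extensions of $u$ must be consistent with $C$, so \textsc{Label} succeeds. The second failure (line~\ref{line:dec-tree-unsuccess-2}) fires when \textsc{Split-Node} cannot find a distinguishing attribute; but \textsc{Split-Node} is only called when $n.dat$ is not a singleton, and separability of $X$ guarantees a base predicate distinguishing any two of its (at least two) points, so \textsc{Select-Attribute} succeeds. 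Hence every iteration either labels a leaf or splits one, and upon termination every leaf is labelled, so $u_\T$ is total on $X$; since $u$ extends $u_\T$ and $u \models \bigwedge C$, and $v_\T$ is the total valuation $u_\T$ restricted to $X$, we get $v_\T \models \bigwedge C$, i.e.\ $\T$ is consistent with the Horn sample.

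For termination and the $\mathcal O(h\cdot n^3)$ bound, I would use a potential/counting argument. Each \textsc{Split-Node} strictly increases the number of leaves, and since each split sends at least one point to each child and leaves are associated with disjoint nonempty subsets of $X$, there can be at most $n$ leaves, hence at most $n-1$ splits and $O(n)$ total iterations of the \texttt{while} loop. The per-iteration cost is dominated by \textsc{Horn-Solve} and \textsc{Select-Attribute}. \textsc{Select-Attribute} / \textsc{Select-Node} cost $\mathrm{poly}(n)$ in the number of points and predicates; the dominant term is \textsc{Horn-Solve}, which (from the next section) runs in time $\mathcal O(h \cdot n^2)$ — essentially a unit-propagation fixpoint over $h$ Horn clauses on $n$ variables. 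Multiplying by the $O(n)$ iterations yields $\mathcal O(h\cdot n^3)$. \textbf{The main obstacle I anticipate} is the careful bookkeeping in the loop invariant — in particular verifying that extending $u$ at one leaf never conflicts with partial labels already committed elsewhere in the tree, and that the ``forced'' points returned by \textsc{Horn-Solve} are consistent with the disjointness of the leaf partitions and with $u$'s role as an extension of $u_\T$. Everything else is a routine induction or counting argument once that invariant is pinned down precisely, and it leans on the contract of \textsc{Horn-Solve}, which the theorem is entitled to assume.
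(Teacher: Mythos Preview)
Your proposal is correct and follows essentially the same approach as the paper: you maintain the invariant that $u$ extends $u_\T$ and is consistent with $C$, rule out the two failure lines via the singleton-consistency argument and separability, and obtain the $\mathcal O(h\cdot n^3)$ bound by counting at most $O(n)$ iterations each dominated by an $\mathcal O(h\cdot n^2)$ call to \textsc{Horn-Solve}. Your treatment is in fact slightly more explicit than the paper's (e.g., you spell out why a singleton is automatically pure and why \textsc{Split-Node} trivially preserves the invariant), but the structure and key ideas are the same.
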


\begin{proof}
At each iteration step, the algorithm maintains the invariant that the
partial valuation $u$ is an extension of the partial valuation
$u_{\T}$ induced by the current (partial) decision tree $\T$, and is
consistent with $C$. This is because each labelling step is first
checked by a call to the horn-solver, which also correctly identifies
the set of forced values, which are then used to update $u$.
It follows that if the algorithm terminates successfully in
Line~\ref{line:dec-tree-success}, then $u_{\T}$ is a full valuation
which coincides with $u$, and hence satisfies $C$.
The only way the algorithm can terminate unsuccessfully is in
Line~\ref{line:dec-tree-unsuccess-1} or
Line~\ref{line:dec-tree-unsuccess-2}.
The first case is ruled out since if $n.dat$ is singleton, and by
assumption $u_{\T}$ is consistent with $C$, we must be able to label
the single data point with either $\true$ or $\false$ in a way that is
consistent with $C$.
The second case is ruled out, since under the assumption of
separability the \textsc{Select-Attribute} procedure will always return a
non-trivial attribute (see Sec.~\ref{sec:attribute-selection}).

The learning algorithm (Algorithm~\ref{alg:decision-tree}) runs in cubic time
in the size $n$ of the input set $X$ of data points and linear in the number $h$ of Horn constraints.
To see this, observe that in each iteration of the loop the algorithm
produces a tree that is either the same as the previous step (but with a
leaf node labelled), or splits a leaf to extend the previous tree.
At each step we maintain an invariant that the collection of data points
in the leaf nodes forms a partition of the input set $X$. Thus the
number of leaf nodes is bounded by $n$, and hence each tree has a
total of at most $2n$ nodes. When the algorithm returns (successfully or
unsucessfully) each node in the final tree has been processed at
most once by calling the labelling/splitting subroutines on
it. Furthermore, the main work in the subroutines is the call to the
horn-solver, which takes $\mathcal O(h\cdot n^2)$ time (see
Sec.~\ref{sec:horn-solver}). It follows that
Algorithm.~\ref{alg:decision-tree} runs in $\mathcal O(h\cdot n^3)$ time. \qed
\end{proof}

When the points in $X$ are not separable, as done in \cite{GNMR16}, we
add ``iff'' constraints between every pair of inseparable points, and
if the resulting Horn constraints are satisfiable, our algorithm is
guaranteed to construct a decision tree consistent with the given
Horn constraints.

Furthermore, we can extend our algorithm to work on an infinite enumerable set
of predicates ${\cal P}$, and assure that the algorithm will find an invariant
if there is one, as done in~\cite{GNMR16}. We can  take some finite set $X \subseteq P$,
asking whether there is some invariant over $X$ that satisfies $P$, and if not
grow $X$ by taking finitely more predicates from ${\cal P}\setminus X$. This is clearly guaranteed to converge
on an invariant if one is expressible over ${\cal P}$. 

%%% Local Variables:
%%% mode: latex
%%% TeX-master: "main.tex"
%%% End:

%---------- Horn solver ----------
% !Tex root=main.tex

\section{Algorithm for solving and propagating Horn constraints}
\label{sec:horn-solver}

In the decision-tree based learning approach, our aim is to construct a
decision-tree representing a Boolean combination of some base
predicates, that is consistent with a given set of Horn constraints.
A crucial step in our decision-tree algorithm is to check whether a proposed
extension of the current partial valuation maintained by the Learner,
is indeed consistent with
the given Horn constraints, and if so to determine the set of
propositional variables that are ``forced'' to be true or false.
In this section we describe an efficient algorithm to carry out
this step.
Our algorithm is an adaptation of the ``pebbling'' algorithm of
\cite{DowlingG84} for checking satisfiability of Horn formulas, to
additionally find the variables that are forced to be true or false.

\begin{algorithm}[th]
\caption{}\label{alg:horn-solver}
\begin{algorithmic}[1]
\Procedure{Horn-Solve}{}
\ \ \ \ \ \ \ \ \ \ \Require Horn constraints $C$ over $X$, partial
valuation $u$ over $X$, and $T, F \subseteq X$.
\ \ \ \ \ \ \ \ \ \ \Ensure {``Unsat'' if $u^T_F$ is inconsistent with $C$; 
        ``Sat'', \forcedt{C}{u^T_F}, \forcedf{C}{u^T_F} otherwise.}
\State Add two new variables $\True$ and $\False$ to $X$. Let $X'
 = X \union \{\True, \False\}$.
\State $C'$ $\gets$ $C$ +
clauses $\true \Implies x$ for each $x$ such
that $u(x) = \true$ or $x \in T$, and $x \Implies \false$ for each $x$
such that $u(x) = \false$ or $x \in F$.
\State Mark variable $\True$ with ``$\mymark$'', and
each variable $x \in X$ with ``$\mymark x$''.
\Repeat {
  % \State Apply the rules below till no new marks can be added:
\State For each constraint $x_1 \And \cdots \And x_l \Implies y$
in $C'$:
% \State (we identify $\true \Implies x$ and $x \Implies \false$ with
% $\True \Implies x$ and $x \Implies \False$ respectively):
\If {$x_1$, \ldots, $x_l$ are all marked ``$\mymark$''}
       {mark $y$ with ``$\mymark$''}
       \EndIf
\If {$\exists z \in X$ s.t.\@ $x_1$, \ldots,
  $x_l$ are all marked ``$\mymark z$'' or ``$\mymark$''}
       {mark $y$ with ``$\mymark z$''}
       \EndIf
}
\Until{no new marks can be added}
\State $P \gets \{ x \in X \ | \ x \mbox{ is marked }
``\mymark\mbox{''} \}$,
       $N \gets \{ x \in X \ | \ \False \mbox{ is marked }
``\mymark\!x\mbox{''} \}$
\State $P' \gets P - (T \union \dom_{\true}(u))$,
       $N' \gets N - (F \union \dom_{\false}(u))$
\If {($\False$ is marked ``\mymark'')} \Return ``Unsat''
\Else{} \Return ``Sat'', $P'$, $N'$
\EndIf
\EndProcedure
\end{algorithmic}
\end{algorithm}

Procedure~\textsc{Horn-Solve} in Algorithm~\ref{alg:horn-solver}
shows our procedure for checking consistency of a partial
valuation with respect to a given set of Horn constraints $C$, as well
as identifying the subset of variables forced to true or false.
Intuitively, the standard linear-time algorithm for Horn satisfiability
in fact already identifies the \emph{minimal} $M$ set of variables that are
forced to be true in any satisfying valuation, and assures us that the
others $\vec{M}$ can be set to false. However, the other variables are not
\emph{forced} to be false. Our algorithm essentially runs another
set of SAT problems where each of the other variables are set to \emph{true}
(this is modeled by the variable being marked $*$ in the algorithm);
this returns SAT iff the variable is not forced to be \emph{false}.
The following example illustrates Algorithm~\ref{alg:horn-solver}.

\begin{figure}[th]
	 \centering
	\begin{tikzpicture}

		% Left
		\node (0) at (0, 0) {$y$};
		\node (1) at (0, -1.4) {$x$};
		\draw[->] (0.east) -- ++(.9, -.7) coordinate (A) -- ++(.9, 0) node[anchor=west] (2) {$z$};
		\draw (1.east) -- (A);
		\draw[fill=white] (A) circle (1.1pt);
		\draw[->] (1.north) -- (0.south);
		\draw[shorten <=1pt, shorten >=0pt, <-] (1.west) -- ++(-.9, 0) node[anchor=east] (true) {$\True$};

		% right
		\node (3) at (4, 0) {$b$};
		\node (4) at (4, -1.4) {$a$};
		\draw[->] (3.east) -- ++(.9, -.7) coordinate (B) -- ++(.9, 0) node[anchor=west] (false) {$\False$};
		\draw (4.east) -- (B);
		\draw[fill=white] (B) circle (1.1pt);
		\draw[->] (3) edge[bend left] (4);
		\draw[->] (4) edge[bend left] (3);

		% Annotations
		\begin{scope}[font=\scriptsize, inner sep=0pt, text=black!75]
			% True
			\node[anchor=north, yshift=-.25ex] at (true.south) {\strut $\ast$};
			% y
			\node[anchor=south west, xshift=-.25ex, yshift=-.25ex] at (0.north east) {\strut $\ast, \ast y, \ast x$};
			\node[anchor=south east, xshift=.25ex, yshift=-.25ex] at (0.north west) {\strut (+)};
			% x
			\node[anchor=north west, xshift=-.25ex, yshift=.25ex] at (1.south east) {\strut$\ast, \ast x$};
			\node[anchor=north east, xshift=.25ex, yshift=.25ex] at (1.south west) {\strut +};
			% z
			\node[anchor=north, yshift=.25ex] at (2.south) {\strut$\ast, \ast z, \ast x, \ast y$};
			\node[anchor=south, yshift=-.25ex] at (2.north) {\strut (+)};
			% b
			\node[anchor=south west, xshift=-.25ex, yshift=-.25ex] at (3.north east) {\strut $\ast b, \ast a$};
			\node[anchor=south east, xshift=.25ex, yshift=-.25ex] at (3.north west) {\strut (-)};
			% a
			\node[anchor=north west, xshift=-.25ex, yshift=.25ex] at (4.south east) {\strut $\ast a, \ast b$};
			\node[anchor=north east, xshift=.25ex, yshift=.25ex] at (4.south west) {\strut (-)};
			% False
			\node[anchor=north, yshift=-.25ex] at (false.south) {\strut $\ast a, \ast b$};
		\end{scope}

	\end{tikzpicture}

	 \caption{Example illustrating Algorithm~\ref{alg:horn-solver}. The given set of Horn constraints is $C = \{x \Implies y, x \And y  \Implies z, a \Implies b, b \Implies a, a \And b \Implies  \False\}$, $T = \{x\}$, $F = \emptyset$, and the partial valuation is empty.
 The algorithm outputs ``Sat'' together with $P = \{y,z\}$ and $N = \{a,b\}$.} \label{fig:horn-algo-example}
\end{figure}
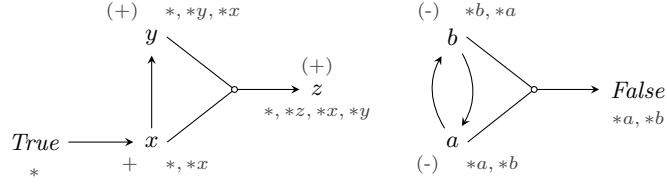

\begin{example}
Fig.~\ref{fig:horn-algo-example} illustrates the working of procedure \textsc{Horn-Solve} on an example set of Horn constraints $C = \{x \Implies y, x \And y \Implies z, a \Implies b, b \Implies a, a \And b \Implies \False\}$. The given partial valuation is empty, the set of variables $T$ initially set to $\true$ is $\{x\}$, and the set of variables $F$ set to $\false$ is $\emptyset$. 
The final marking computed by the procedure is shown below each variable. Variables set to $\true$ in the partial assignment or in the input set $T$, are shown with a ``+'' above them. Variables forced to true (respectively false) are shown with a ``(+)'' (respectively ``(-)'') above them. \exampleend
\end{example}

For the remainer of this section, let us fix $X$, $C$, $u$, $T$, and $F$ to be the inputs to the
procedure, and let $X'$, $C'$, $P$, $N$, $P'$ and $N'$ be as described
in the algorithm.
It is clear that there exists an extension of $u^T_F$ satisfying
$C$ ifand only if $C'$ is satisfiable.
Furthermore, the set of variables forced true
by $C$ with respect to $u^T_F$ coincides with those forced
true in $C'$, less the variables in $\dom_\true(u) \union T$.
A similar claim holds for the variables forced to false.
In total, we obtain the following result.

\begin{theorem}
\label{thm:horn-solver}
Let $(X, C)$ be a Horn sample, $n = |X|$, and  $h = |C|$. The procedure \textsc{Horn-Solve} runs in time $\mathcal O(h\cdot n^2)$ and outputs
\begin{itemize}
	\item ``Sat'' and the set of variables forced to be true (in $P$) and false (in $N$), respectively,
if the given extended valuation $u^T_F$ is consistent with the Horn constraints $C$; and
	\item ``Unsat'' otherwise.
\end{itemize}
\end{theorem}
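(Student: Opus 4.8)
The plan is to reduce everything to the classical theory of Horn satisfiability by forward chaining (the ``pebbling'' algorithm of~\cite{DowlingG84}). The paragraph immediately before the theorem already reduces the statement about $u^T_F$ and $C$ to the corresponding statements about $C'$: $u^T_F$ has an extension satisfying $C$ iff $C'$ is satisfiable, and the variables forced true (resp.\ false) by $C$ with respect to $u^T_F$ are those forced true (resp.\ false) in $C'$ with $\dom_{\true}(u)\union T$ (resp.\ $\dom_{\false}(u)\union F$) removed --- which is exactly the pruning applied to obtain $P'$ from $P$ and $N'$ from $N$. So it suffices to establish three facts about the marking computed by the algorithm: (i)~$C'$ is unsatisfiable iff $\False$ receives the mark ``$\mymark$'', and when $C'$ is satisfiable the ``$\mymark$''-marked variables are exactly those forced true in $C'$; (ii)~when $C'$ is satisfiable, for every $z\in X$, $z$ is forced false in $C'$ iff $\False$ receives the mark ``$\mymark z$''; and (iii)~the procedure runs in time $\mathcal O(h\cdot n^2)$.

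For (i), I will use the standard soundness/completeness argument for forward chaining. Initially $\True$ carries ``$\mymark$'', and the repeat--until loop adds ``$\mymark$'' to the head of a definite clause of $C'$ (those of the forms $\true\Implies x$ and $x_1\And\cdots\And x_l\Implies y$, where for $\true\Implies x$ the antecedent list is empty, so $x$ is marked at once) exactly when all of its antecedents are ``$\mymark$''-marked. Since every mark is licensed by a clause, the ``$\mymark$''-marked set is contained in every model of the definite part of $C'$; since the loop runs to closure, this set is itself a model of the definite part; hence it is its least model $m$. If $\False\in m$, the goal clause $x_1\And\cdots\And x_k\Implies\False$ that derived it is violated by $m$, hence by every model of the definite part (each containing $m$), so $C'$ is unsatisfiable; conversely, if $C'$ is unsatisfiable then $m$, which satisfies the definite part, must violate some goal clause, which forces $\False$ into $m$. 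When $C'$ is satisfiable, $m$ is a model of all of $C'$, and since Horn theories are closed under model intersection, $m$ is the intersection of all models of $C'$; thus a variable lies in $m$ iff it is true in every model of $C'$, i.e.\ iff it is forced true, so $P = m \intersection X$ is correct.

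For (ii), fix $z\in X$ and let $C'_z = C' \union \{\true\Implies z\}$. By the same soundness/completeness argument, the set of variables marked ``$\mymark$'' or ``$\mymark z$'' at the fixpoint is exactly the least model of the definite part of $C'_z$: that definite part is the one of $C'$ together with the fact $z$, and forward chaining from $\{\True,z\}$ is captured by the ``$\mymark$'' closure (chaining from $\{\True\}$) together with the ``$\mymark z$'' closure --- the atoms derivable once $z$ is additionally assumed --- because $z$ starts ``$\mymark z$''-marked and a clause fires with ``$\mymark z$'' precisely when each antecedent is already ``$\mymark$''- or ``$\mymark z$''-marked, i.e.\ lies in the least model of the definite part of $C'_z$. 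Consequently, when $C'$ is satisfiable: $z$ is forced false in $C'$ iff every model of $C'$ sets $z$ to $\false$ iff $C'_z$ is unsatisfiable iff $\False$ lies in the least model of the definite part of $C'_z$ (apply (i) to $C'_z$) iff $\False$ is marked ``$\mymark z$'' --- note $\False$ cannot be marked ``$\mymark$'', since $C'$ is satisfiable. Ranging over all $z$, $N = \{x\in X \mid \False \text{ is marked } \mymark x\}$ is exactly the set of variables forced false in $C'$, and the pruning to $N'$ yields the claimed output.

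For (iii), the naive fixpoint that rescans all clauses after each new mark is too slow; instead I will use the counter bookkeeping of~\cite{DowlingG84}: for each of the $\mathcal O(n)$ ``contexts'' --- the ``$\mymark$'' context and one ``$\mymark z$'' context per $z\in X$ --- and each clause of $C'$, maintain a counter of antecedents not yet marked in that context, decrement the relevant counters whenever a variable is marked, and mark the consequent when a counter reaches zero. Each variable is marked in each context at most once; a ``$\mymark z$'' mark touches only the counters of context $z$, while a ``$\mymark$'' mark touches the counters of all $\mathcal O(n)$ contexts, so the work charged to a single marking event is $\mathcal O(n)$ times the number of clauses containing that variable. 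The total size of $C'$ is $\mathcal O(h\cdot n)$ (the $h$ original clauses have total length $\mathcal O(h\cdot n)$, and the at most $2n$ added clauses have total length $\mathcal O(n)$), so the whole fixpoint costs $\mathcal O(h\cdot n^2)$, with the final computation of $P,N,P',N'$ and the test on $\False$ adding only $\mathcal O(n)$. I expect the main obstacle to be making (ii) fully rigorous: showing that the interleaved ``$\mymark$''/``$\mymark z$'' propagation computes exactly the least model of the definite part of $C'_z$ and nothing more, and then matching the event ``$\False$ gets the mark $\mymark z$'' precisely to ``$z$ is forced false'' --- including the subtlety that this equivalence holds only once $C'$ is known to be satisfiable, which is why the algorithm tests the ``$\mymark$''-mark on $\False$ first.
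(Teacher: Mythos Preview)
Your proposal is correct and rests on the same underlying idea as the paper---forward chaining over Horn clauses, adapted from Dowling--Gallier---but the organization differs in a way worth noting. The paper introduces an explicit notion of $C'$-pebbling and states four soundness/completeness propositions (one pair for the ``$\mymark$'' marks, one pair for the ``$\mymark z$'' marks), then assembles these into the proof. You instead phrase part~(i) in terms of the least model of the definite part and the closure of Horn theories under intersection, and for part~(ii) you reduce cleanly to part~(i) applied to $C'_z = C' \cup \{\true \Implies z\}$, observing that the ``$\mymark$''/``$\mymark z$'' closure is exactly forward chaining over $C'_z$. This reduction is precisely what the paper's two ``$\mymark z$'' propositions encode, so the difference is organizational rather than mathematical; your framing is arguably more modular. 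On the running-time bound you are in fact more careful than the paper: the paper's pseudocode is a naive rescan-until-fixpoint loop and its complexity argument is a single sentence, whereas you correctly point out that the stated $\mathcal O(h\cdot n^2)$ bound requires the Dowling--Gallier counter bookkeeping, instantiated once per marking context.
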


To prove Theorem~\ref{thm:horn-solver},  we first introduce the notion of \emph{pebblings}  (adapted from \cite{DowlingG84}) and state several straight-forward propositions.
Let $x$ be a variable in $X'$.
A $C'$-\emph{pebbling} of $(x,m)$ from $\True$ is a
sequence of markings $(x_0,m_0), \ldots, (x_k,m_k)$, such that
$x_k = x$ and $m_k = m$, and each $x_i$ and $m_i$ satisfy:
\begin{itemize}
\item $x_i \in X'$ and
  $m_i \in \{*\} \union \{*x \ | \ x \in X \}$, and
\item one of the following:
  \begin{itemize}
  \item $x_i = \True$ and $m_i = ``*\mbox{''}$, or
  \item $m_i = ``*x_i\mbox{''}$, or
  \item $\exists i_1, \ldots, i_l < i$ such that each $m_{i_k} =
    ``*\mbox{''}$, and $m_i = ``*\mbox{''}$, and $x_{i_1} \And \cdots
    \And x_{i_l} \Implies x_i \in C'$, or
  \item $\exists z \in X$ and $\exists i_1, \ldots, i_l < i$ such
    that each $m_{i_k} = ``*\mbox{''}$  or ``$*z$'', and $m_i = ``*\!z\mbox{''}$,
    and $x_{i_1} \And \cdots \And x_{i_l} \Implies x_i \in C'$.
  \end{itemize}
\end{itemize}
A $C'$-pebbling is \emph{complete} if the sequence cannot be
extended to add a new mark.

It is easy to see that each time the procedure
\textsc{Horn-Solve} marks a variable $x$ with a mark $m$, the sequence
of markings till this point forms a valid $C'$-pebbling of $(x,m)$ 
from $\True$, and the final sequence of markings produced does
indeed constitute a complete pebbling.

\begin{proposition}
\label{prop:peb1}
Consider a $C'$-pebbling of $(x,``*\!\mbox{''})$ from $\True$.
Let $v$ be a valuation such that $v\models C'$.
Then $v(x) = \true$.
\end{proposition}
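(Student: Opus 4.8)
The plan is to prove a natural inductive strengthening of the statement and then read off the proposition as a special case. Concretely, I would fix a valuation $v$ with $v \models C'$ and a $C'$-pebbling $(x_0,m_0),\dots,(x_k,m_k)$ of $(x,``\mymark\mbox{''})$ from $\True$, and show by strong induction on the index $j \in \{0,\dots,k\}$ that \emph{whenever} $m_j = ``\mymark\mbox{''}$ we have $v(x_j) = \true$. Since the pebbling rules only ever refer to strictly earlier positions (the rule for a derived mark requires $i_1,\dots,i_l < i$), this induction is well-founded; and because $m_k = ``\mymark\mbox{''}$ and $x_k = x$, the case $j = k$ yields $v(x) = \true$, which is exactly Proposition~\ref{prop:peb1}.

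For the base cases I would go through the two pebbling rules that place a mark at position $j$ without appealing to earlier pebbles. The rule ``$m_j = ``\mymark x_j\mbox{''}$'' produces a mark of the form $\mymark x_j$, which is \emph{not} the mark $``\mymark\mbox{''}$, so it is vacuous for the claim (and similarly the derivation rule that yields a mark $``\mymark z\mbox{''}$ is irrelevant here). The remaining base case is $x_j = \True$ with $m_j = ``\mymark\mbox{''}$; here $v(x_j) = v(\True) = \true$, using the convention that every admissible valuation interprets the distinguished variable $\True$ as $\true$ (equivalently, one may regard $\true \Implies \True$ as an implicit unit clause of $C'$).

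For the inductive step, suppose $m_j = ``\mymark\mbox{''}$ was obtained by the derivation rule for the mark $``\mymark\mbox{''}$: there is a clause $x_{i_1} \And \cdots \And x_{i_l} \Implies x_j$ in $C'$ with $i_1,\dots,i_l < j$ and $m_{i_1} = \cdots = m_{i_l} = ``\mymark\mbox{''}$ (the case $l = 0$, i.e.\ a unit clause $\true \Implies x_j$, is included). By the induction hypothesis applied to each $i_p$ we have $v(x_{i_p}) = \true$, so the antecedent of this clause evaluates to $\true$ under $v$; since $v \models C'$, the clause is satisfied by $v$, forcing $v(x_j) = \true$. This exhausts all the ways a position can carry the mark $``\mymark\mbox{''}$, completing the induction.

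I do not expect a substantive obstacle: the argument is a routine structural induction that mirrors, rule by rule, the inductive definition of $C'$-pebbling. The only points requiring a little care are bookkeeping — isolating precisely which pebbling rules can yield the pure mark $``\mymark\mbox{''}$ (so that the $\mymark x$- and $\mymark z$-rules can be dismissed) — and flagging explicitly the convention $v(\True) = \true$ so that the $x_j = \True$ base case goes through cleanly.
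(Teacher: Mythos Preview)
The paper does not actually give a proof of this proposition: it is listed among ``several straight-forward propositions'' and is simply stated and then used in the proof of Theorem~\ref{thm:horn-solver}. Your strong induction on the position $j$ in the pebbling sequence is exactly the natural argument the paper is implicitly relying on, and it is correct. Your case analysis matches the four clauses in the definition of a $C'$-pebbling, correctly isolating the first and third as the only ones that can produce the bare mark $``\mymark\mbox{''}$, and dismissing the $``\mymark x_j\mbox{''}$ and $``\mymark z\mbox{''}$ rules as yielding a different mark. The point you flag about $v(\True) = \true$ is a genuine (if harmless) loose end in the paper's presentation: the variable $\True$ is introduced purely as a pebbling anchor and never occurs in any clause of $C'$, so in the applications (where the target $x$ lies in $X$ or equals $\False$) the base case $x_j = \True$ is never actually needed downstream; your treatment of it as a convention is a clean way to close the gap.
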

  
\begin{proposition}
\label{prop:peb4}
Consider a $C'$-pebbling of $(y,``*x\!\mbox{''})$ from $\True$.
Let $v$ be a valuation such that $v(x) = \true$ and $v\models C'$.
Then $v(y) = \true$.
\end{proposition}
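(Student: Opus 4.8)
The plan is to proceed by strong induction on the length of the given pebbling, strengthening the statement so that it also covers intermediate marks of the form ``$*z$''. Fix a valuation $v$ with $v \models C'$, and let $(x_0, m_0), \ldots, (x_k, m_k)$ be the given $C'$-pebbling of $(y, ``*x\mbox{''})$ from $\True$, so that $x_k = y$ and $m_k = ``*x\mbox{''}$. I would first record the easy observation that for every $i \le k$ the prefix $(x_0, m_0), \ldots, (x_i, m_i)$ is itself a valid $C'$-pebbling of $(x_i, m_i)$ from $\True$, since each justifying rule refers only to strictly earlier markings; this is what lets us invoke Proposition~\ref{prop:peb1} and the induction hypothesis on such prefixes. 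The claim to establish, by induction on $i$, is: \emph{if $m_i = ``*z\mbox{''}$ for some $z \in X$ with $v(z) = \true$, then $v(x_i) = \true$.}

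For the inductive step, only two of the four pebbling rules can produce a mark of the form ``$*z$''. In the first, $m_i = ``*x_i\mbox{''}$ (a variable marking itself), so $z = x_i$ and the assumption $v(z) = \true$ gives $v(x_i) = \true$ immediately. In the second, there are a clause $x_{i_1} \And \cdots \And x_{i_l} \Implies x_i$ in $C'$ and indices $i_1, \ldots, i_l < i$ whose marks $m_{i_1}, \ldots, m_{i_l}$ are each either ``$*$'' or ``$*z$'' for \emph{the same} $z$ that appears in $m_i = ``*z\mbox{''}$. For each $j$: if $m_{i_j} = ``*\mbox{''}$, then Proposition~\ref{prop:peb1}, applied to the prefix pebbling of $(x_{i_j}, ``*\mbox{''})$, gives $v(x_{i_j}) = \true$; if $m_{i_j} = ``*z\mbox{''}$, then the induction hypothesis (applicable since $i_j < i$ and $v(z) = \true$) gives $v(x_{i_j}) = \true$. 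Thus $v$ satisfies the whole premise $x_{i_1} \And \cdots \And x_{i_l}$, and since $v \models C'$ and this clause lies in $C'$, we conclude $v(x_i) = \true$, completing the induction.

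Applying the claim with $i = k$ and $z = x$ — which is legitimate because $v(x) = \true$ by hypothesis — yields $v(y) = v(x_k) = \true$, which is exactly the proposition.

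The argument is routine; the one point that genuinely matters is that in the clause-based rule the auxiliary variable $z$ is forced to coincide with the superscript of the mark being produced, and this is precisely what allows the inductive assumption ``$v(z) = \true$'' to be threaded through the recursion, while Proposition~\ref{prop:peb1} discharges the purely ``$*$''-marked antecedents.
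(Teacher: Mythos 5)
Your proof is correct. The paper states Proposition~\ref{prop:peb4} without proof (it is listed among the ``straight-forward propositions''), so there is nothing to compare against; your argument --- strengthening the statement to all prefix entries marked ``$*z$'', observing that prefixes of pebblings are pebblings, and doing induction over the two rules that can produce a ``$*z$'' mark while discharging the ``$*$''-marked antecedents via Proposition~\ref{prop:peb1} --- is exactly the intended routine argument, and you correctly identify the one load-bearing detail, namely that the clause rule forces the superscript $z$ of the antecedent marks to agree with that of the produced mark.
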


\begin{proposition}
\label{prop:peb2}
Consider a complete $C'$-pebbling from $\True$, in which $\False$ is
not marked ``$*$''.
Let $v$ be the valuation given by
\[
v(x) = \begin{cases} \true & \text{if $x$ is marked ``$*$''; and} \\ \false & \text{otherwise.} \end{cases}
\]
Then $v \models C'$.
\end{proposition}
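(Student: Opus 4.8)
The plan is to verify directly that $v$ satisfies every clause of $C'$, arguing clause by clause. Recall first that, by construction, each clause of $C'$ has the form $x_{i_1} \And \cdots \And x_{i_l} \Implies y$ with $y \in X'$, and that the body variables $x_{i_1}, \ldots, x_{i_l}$ all lie in $X \union \{\True\}$: the symbol $\False$ never occurs in a clause body, since both in the original Horn clauses over $X$ and in the unit clauses added by \textsc{Horn-Solve} it appears only as a consequent. Recall also that $\True$ is marked ``$*$'' in every pebbling (the base case of the definition), so $v(\True) = \true$, while $v(\False) = \false$ by the hypothesis that $\False$ is not marked ``$*$''. With $v$ defined on $X'$ this way, I would fix an arbitrary clause $x_{i_1} \And \cdots \And x_{i_l} \Implies y$ of $C'$ and split on whether its body holds under $v$.

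If some body variable $x_{i_j}$ is not marked ``$*$'', then $v(x_{i_j}) = \false$ by definition of $v$, the body is false under $v$, and the clause is trivially satisfied. Otherwise every $x_{i_j}$ is marked ``$*$'' in the given pebbling (vacuously so if the body is empty, i.e.\@ the clause is an asserted fact). Here I invoke completeness: were $(y, ``*\mbox{''})$ not already present in the pebbling, it could be appended as a new mark, since the third clause of the definition of a pebbling is precisely the rule ``if the body variables are all marked ``$*$'' then the head may be marked ``$*$'' '' — one simply takes the indices $i_1, \ldots, i_l$ to be positions at which the body variables occur with mark ``$*$''. This would contradict the pebbling being complete, so $y$ must already be marked ``$*$''.

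Finally I distinguish on $y$. If $y = \False$, then $\False$ is marked ``$*$'', contradicting the standing hypothesis; hence this subcase never arises, so whenever the body holds under $v$ we have $y \neq \False$, and then $y$ being marked ``$*$'' gives $v(y) = \true$, so the clause is satisfied. In all cases the chosen clause of $C'$ is satisfied by $v$, and since the clause was arbitrary, $v \models C'$.

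I do not expect a genuine obstacle: the argument is essentially a completeness-of-the-fixpoint observation. The only points that need a little care are the precise reading of ``complete'' — namely that no valid new mark can be appended, which is exactly what licenses the step that an all-``$*$'' body forces its head to be marked ``$*$'' — and the bookkeeping that $\False$ never appears in a clause body, so that ``marked $*$'' is meaningful for every body variable and the contradiction in the $y = \False$ subcase is legitimate.
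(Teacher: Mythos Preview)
Your argument is correct. The paper itself does not spell out a proof of this proposition; it is listed among the ``straight-forward propositions'' following the definition of pebblings and is used without further justification in the proof of Theorem~\ref{thm:horn-solver}. What you have written is exactly the natural argument one would supply: for each clause, either some antecedent is unmarked and the clause holds vacuously, or all antecedents carry ``$*$'' and then completeness of the pebbling forces the head to carry ``$*$'' as well, which (since the head cannot be $\False$ by hypothesis) makes the clause true under $v$. Your side observations---that $\True$ is always marked ``$*$'' by the base rule, and that $\False$ never occurs in a clause body in $C'$---are the right pieces of bookkeeping to make the case analysis go through cleanly.
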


\begin{proposition}
\label{prop:peb3}
Let $y \in X$. 
Consider a complete $C'$-pebbling from $\True$, in which $\False$ is
not marked ``$*y$''.
Let $v$ be the valuation given by
\[
	v(x) = \begin{cases} \true & \text{if $x$ is marked ``$*$'' or ``$*y$''; and} \\ \false & \text{otherwise.} \end{cases}
\]
Then $v \models C'$.
\end{proposition}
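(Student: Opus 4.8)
The plan is to verify that $v$ satisfies each clause of $C'$ individually, mirroring the proof of Proposition~\ref{prop:peb2} but now tracking the conditional marks ``$*y$'' alongside the unconditional marks ``$*$''. The single engine driving the argument is the \emph{completeness} of the pebbling: if, given the marks already present, one of the two marking rules would license a mark on the head of some clause, then that head-mark must in fact already be present, for otherwise we could append it and extend the sequence, contradicting completeness. I would first isolate this as the working principle and note that the rule relevant here is the second (conditional) one instantiated at $z = y$, which fires precisely when every body variable is marked ``$*$'' or ``$*y$'' (with the first rule covering the case where all body variables carry ``$*$'').

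Concretely, I would fix an arbitrary clause $x_1 \And \cdots \And x_l \Implies w \in C'$, where $w \in X'$ is possibly $\False$ and the body may be empty for the fact clauses $\true \Implies x$, and argue that $v$ satisfies it. If some body variable $x_i$ has $v(x_i) = \false$, the clause holds because its body is false. Otherwise every $x_i$ has $v(x_i) = \true$, so by the definition of $v$ each $x_i$ carries the mark ``$*$'' or ``$*y$'', and all these marks occur in the complete pebbling at positions preceding any appended mark. Taking $z = y \in X$, the body then meets the hypothesis of the conditional marking rule (or the first rule if all $x_i$ are marked ``$*$''), so appending the mark $(w, *y)$ yields a valid pebbling; by completeness $w$ is already marked ``$*y$'' (or ``$*$''). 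The empty-body fact clauses are handled directly, since their head is forced to be marked ``$*$''.

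It then remains to resolve the two cases for the head $w$. If $w$ is an ordinary variable, its being marked ``$*$'' or ``$*y$'' gives $v(w) = \true$ by definition, so the clause is satisfied. If instead $w = \False$, the previous step would force $\False$ to be marked ``$*y$'', contradicting the standing hypothesis of the proposition; hence the subcase ``all body variables true under $v$'' cannot arise for a goal clause, so some body variable is already false under $v$ and the clause is again satisfied. As the clause was arbitrary, $v \models C'$. I expect the only delicate points to be the bookkeeping that converts ``the rule can fire'' into ``the mark is present'' through completeness (including the vacuous empty-body case), and the pinpointed use of the assumption that $\False$ is \emph{not} marked ``$*y$'' exactly in the head-$\False$ case; the remainder is a routine clause-by-clause check parallel to Proposition~\ref{prop:peb2}.
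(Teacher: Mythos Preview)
The paper does not actually supply a proof of this proposition: it is introduced, together with Propositions~\ref{prop:peb1}, \ref{prop:peb4}, and \ref{prop:peb2}, as one of ``several straight-forward propositions'' immediately after the definition of pebblings, and is then invoked without argument in the proof of Theorem~\ref{thm:horn-solver}. Your proposal fills in exactly the routine clause-by-clause verification the paper leaves implicit, and it is correct. In particular, the two points you single out as delicate are the right ones: completeness guarantees that whenever every body variable of a clause carries ``$*$'' or ``$*y$'' the fourth marking rule (instantiated at $z=y$) forces the mark $(w,*y)$ on the head, and the hypothesis that $\False$ is not marked ``$*y$'' is used precisely to dispose of the goal-clause case. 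One small remark: the parenthetical ``(or `$*$')'' after ``$w$ is already marked `$*y$'\,'' is harmless but unnecessary---the fourth rule alone already yields the ``$*y$'' mark on $w$, which is all you need both for $v(w)=\true$ and for the contradiction when $w=\False$.
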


Given Propositions~\ref{prop:peb1} to \ref{prop:peb3}, we can now prove Theorem~\ref{thm:horn-solver}.

\begin{proof}[of Theorem~\ref{thm:horn-solver}]
We first observe that the procedure \textsc{Horn-Solve} clearly terminates since there is a finite number
of variables in $X$ (say $n$) and each variable can only be
marked $n+1$ times.
In fact, if $h$ is the number of clauses in $C$, the running time
of the procedure is bounded by $\mathcal O(h\cdot n^2)$.

We return now to the correctness of the procedure \textsc{Horn-Solve}.
Firstly, the procedure outputs ``Unsat'' if and only if $C'$ is unsatisfiable.
Suppose the procedure outputs ``Unsat''.
Then, $\False$ must be marked ``$*$'' by the algo.
But this means we have a $C'$-pebbling of $\False$ from $\True$.
Now, since we don't have a clause $\True \Implies \False$ in $C'$, we
must have a clause $c \in C'$ of the form $x_1 \And \cdots \And x_l
\Implies \False$ and a pebbling (by ``$*$'') of each $x_i$ from
$\True$.
Now, suppose to the contrary that there exists a valuation $v$ such
that $v \models C'$.
By Prop.~\ref{prop:peb1}, $v$ must set each $x_i$ to $\true$. But this
means that $v$ does not satisfy clause $c$, which contradicts the
assumption that $v \models C'$.

Conversely, suppose the procedure outputs ``Sat''.
Then, we have have a complete pebbling in which
$\False$ is not labelled ``$*$''.
By Prop.~\ref{prop:peb2}, we can construct a valuation $v$ which
satisfies $C'$, and hence $C'$ is satisfiable.

We further need to argue that in the latter case (when the algorithm
outputs ``Sat''), the $P$ and $N$ sets are computed correctly as the
set of variables forced to true and false respectively, in
$C'$.
We begin with the $P$ set.
Recall that $P$ is the set of variables that are marked ``$*$'' at the
end of the procedure, and hence in a complete pebbling.
It follows from Prop.~\ref{prop:peb1} that all the variables in $P$
are indeed forced true.
Conversely, suppose there is a variable $x$ that is not marked ``$*$''
by the algorithm.
Then, by Prop.~\ref{prop:peb2} there is a valuation $v$ which satisfies
$C'$, in which $x$ is set to $\false$. Hence $x$ is not forced to
true by $C'$.

Regarding the $N$ set, recall that it is the set of variables $x$
such that $\False$ is marked ``$*x$''.
Let $x \in N$, and let $v$ be a valuation that sets $x$ to $\true$,
and satisfies $C'$.
Then there must be a clause $c$ of the form $x_1 \And \cdots \And x_l
\Implies \false$ in $C'$, and each $x_i$ is marked ``$*$'' or ``$*x$''.
By Props.~\ref{prop:peb1} and \ref{prop:peb4}, $v$ must set each of
the $x_i$'s to $\true$.
But then $v$ does not satisfy clause $c$, and this contradicts the
assumption that $v \models C'$. Hence every variable in $N$ is forced
false.
Conversely, suppose $x \not\in N$.
Hence $\False$ is not marked ``$*x$''.
Then, by Prop.~\ref{prop:peb3}, there is a valuation $v$ which sets $x$
to $\true$ and satisfies $C'$.
Hence, $x$ is not forced to $\false$ by $C'$.
This completes the proof of the correctness of the procedure.
\qed
\end{proof}

%%% Local Variables:
%%% mode: latex
%%% TeX-master: "main.tex"
%%% End:

%---------- Attribute Selection ----------
% !Tex root=main.tex

\section{Node and Attribute Selection}
\label{sec:attribute-selection}

The decision tree algorithm in Section~\ref{sec:learner} returns a consistent tree irrespective of the order in which nodes of the tree are processed or the heuristic used to choose the best attribute to split nodes in the tree. If one is not careful while selecting the next node to process or one ignores the Horn constraints while choosing the attribute to split the node, seemingly good splits can turn into bad ones as data points involved in the Horn constraints get classified during the construction of the tree.
We experimented with the following strategies for node and attribute selection:
 \begin{description}
    \item[Node selection:] breadth-first-search; depth-first-search; random selection; selecting nodes with the maximum/minimum entropy
    \item[Attribute selection:] based on a new information gain metric that penalizes node splits that cut Horn constraints; based on entropy for Horn samples obtained by assigning probabilistic likelihood values to unlabeled datapoints using model counting.
\end{description}

So as to clutter the paper not too much, we here only describe the best performing combination of strategies in detail. The experiments reported in Section~\ref{sec:experiments} have been conducted with this combination.

\subsubsection*{Choosing the next node to expand the decision tree}
We select nodes in a breadth-first search (BFS) order for building the decision tree. BFS ordering ensures that while learning multiple invariant annotations, the subtree for all invariants gets constructed simultaneously. In comparison, in depth-first ordering of the nodes, subtrees for the multiple invariants are constructed one after the other. In this case, learning a simple invariant, for e.g.\@ \emph{true}, for an annotation usually forces the invariant for a different annotation to become very complex.

\subsubsection*{Choosing attributes for splitting the given node}
Similar to ~\cite{GNMR16}, we observed that if one chooses attribute splits based on the entropy of the node that ignores Horn constraints, the invariant learning algorithm tends to produce large trees. 
In the same spirit as ~\cite{GNMR16}, we penalize the information gain for attribute splits that cut Horn constraints, and choose the attribute with the highest corresponding information gain. 
For a sample $S = (X, C)$ that is split with respect to attribute $a$ into subsamples $S_a$ and $S_{\neg a}$, we say that the corresponding attribute split cuts a Horn constraint $\psi \in C$ if and only if
\begin{itemize}
	\item $x \in \textit{premise}(\psi)$ and $x \in S_a$ and $\textit{conclusion}(\psi) \in S_{\neg a}$; or 
	\item $x \in \textit{premise}(\psi)$ and $x \in S_{\neg a}$ and $\textit{conclusion}(\psi) \in S_a$.
\end{itemize}
The \emph{penalized} information gain is defined as
\[ \textit{Gain}_\textit{pen}(S, S_a, S_{\neg a}) = \textit{Gain}(S, S_a, S_{\neg a}) - \textit{Penalty}(S, S_a, S_{\neg a}, C), \]
where the penalty is proportional to the number of Horn constraints cut by the attribute split. However, we do not penalize a split when it cuts a Horn constraint such that the premise of the constraint is labeled negative and the conclusion is labeled positive. We incorporate this in the penalty function by formally defining it as
\[ \textit{Penalty}(S, S_a, S_{\neg a}, H) = \sum\limits_{\substack{\psi \in H, x \in S_a\\x \in \textit{premise}(\psi) \\ \textit{conclusion}(\psi) \in S_{\neg a}}} \big( 1 - f(S_a, S_{\neg a})\big) + \sum\limits_{\substack{\psi \in H, x \in S_{\neg a} \\x \in \textit{premise}(\psi) \\ \textit{conclusion}(\psi) \in S_{a}}} \big( 1 - f(S_{\neg a}, S_{a})\big), \]
\noindent where, for subsamples $S_1$ and $S_2$, $f(S_1, S_2)$ is the likelihood of $S_1$ being labeled negative and $S_2$ being labeled positive (i.e., $f(S_1, S_2) = \frac{N_1}{P_1 + N_1}.\frac{P_2}{P_2 + N_2}$). Here, $P_i$ and $N_i$ is the number of positive and negative datapoints respectively in the sample $S_i$.

%---------- Experiments ----------
% !Tex root=main.tex

\section{Experimental Evaluation}
\label{sec:experiments}
We have implemented a prototype, named \emph{Horn-DT} to evaluate the proposed learning framework.
The decision tree learning algorithm and the Horn solver are fresh implementations, consisting of roughly 6000 lines of C\raisebox{1pt}{\small {+\!+}} code. The teacher, on the other hand, is implemented on top of Microsoft's program verifier Boogie~\cite{DBLP:conf/fmco/BarnettCDJL05} and reuses much of the code originally developed by Garg et al.~\cite{GNMR16} for their ICE learning tool. Via C-to-Boogie conversion tools such as \textsc{Smack}~\cite{DBLP:conf/cav/RakamaricE14}, our prototype becomes a fully automatic program verifier for C programs.

We have evaluated our prototype on two suits of benchmarks. The first suite consists of 56 sequential programs of varying complexity, mainly taken the Software Verification Competition (\textsc{SV-Comp})~\cite{DBLP:conf/tacas/Beyer17}. 
The second benchmark suite consists of 24 concurrent and recursive programs and includes popular concurrent protocols like Peterson's algorithm, a producer-consumer problem and complex recursive programs like McCarthy91.

A natural choice to compare Horn-DT to is \emph{Seahorn}~\cite{DBLP:conf/cav/GurfinkelKKN15}. Seahorn is a fully automated analysis framework for LLVM-based languages, which compiles C and C\raisebox{1pt}{\small {+\!+}} code via the LLVM intermediate language into verification conditions in form of constrained Horn clauses. These Horn clauses can then be solved by any suitable solver. For our experimental evaluation, we used the PDR engine~\cite{DBLP:conf/sat/HoderB12} implemented in Z3~\cite{DBLP:conf/tacas/MouraB08}, which is offered as the standard option in Seahorn.

Unfortunately, Seahorn cannot reason about concurrent programs and, thus, we can only compare to Seahorn on sequential programs (we are not aware of any other Horn-based invariant synthesis tool to which we could compare on this benchmark suite).
Moreover, it is important to emphasize that our evaluation is not a like-to-like comparison due to the following two reasons:
\begin{itemize}
	\item Boogie and Seahorn produce different verification conditions: as Boogie allows very rich specifications (e.g., including quantifiers), the verification conditions produced by it are typically larger and more complex compared to those produced by Seahorn.
	\item The PDR engine does not require invariant templates while Horn-DT does. More precisely, Horn-DT learns invariants that are arbitrary Boolean combinations of Boolean program variables and predicates of the form $x \pm y \leq c$ where $x,y$ are numeric program variables and $c$ is a constant determined by the decision tree learner.	(Note, however, that this particular choice of template is not a general restriction of our framework and can easily be changed should the specific domain require it.)
\end{itemize}
As a consequence of these differences, our general expectation was that Seahorn with PDR to performs slightly better than Horn-DT on ``simple'' programs that do not require the overhead introduced by Boogie and permit simple invariants. Conversely, we expected a superior performance of Horn-DT on complex programs (e.g., programs working over arrays or performing nonlinear computations) due to its black box nature and the use of templates.

\pgfplotsset{runtime diagram/.style={%
	xmin = 1e-2, xmax = 1e2,%
	ymin = 1e-2, ymax = 1e2,%
	log basis x=10,%
	log basis y=10,%
	enlarge x limits=false, enlarge y limits=false,%
	xtickten={-1,...,1},%
	ytickten={-1,...,1},%
	extra x ticks={1e2}, extra x tick labels={\strut TO},%
	extra y ticks={1e2}, extra y tick labels={\strut TO},%
	xlabel near ticks,
	ylabel near ticks,
}}

\pgfplotsset{runtime diagram marks/.style={%
	only marks,%
	mark=x,%
	mark size=2.5,%
}}

 \begin{figure}[th]
	\centering

	\begin{tikzpicture}
		\begin{axis}[
			name=plot1,
			width=50mm,
			y = 7.5mm,
			xbar stacked,
			xmin = 0, xmax = 56,
			extra x ticks={56},
			enlarge y limits={.75},
			xlabel={Number of programs},
			symbolic y coords={Horn-DT, Seahorn},
			ytick=data,
			legend style={at={(0,-0.6)}, anchor=north west, legend columns=1, draw=none, legend cell align=left, column sep=.75em},
			cycle list={{fill=black!60,draw=black}, {fill=black!35,draw=black}, {fill=black!15,draw=black}},
			]
			
			\addplot+ [xbar] coordinates {(56,Horn-DT) (43,Seahorn) };
			\addlegendentry{Verified}
			
			\addplot+ [xbar] coordinates {(0,Horn-DT) (3,Seahorn) };
			\addlegendentry{Timeout ($300$ s)}
						
			\addplot+ [xbar] coordinates {(0,Horn-DT) (10,Seahorn) };
			\addlegendentry{Error / false positive}
			
		\end{axis}

		\begin{loglogaxis}[
			at={(plot1.north east)}, anchor=north west, xshift=40mm,
			runtime diagram,
			height=50mm,
			xlabel = {Seahorn (time in s)},
			ylabel = {Horn-DT (time in s)},
		]
	
		\addplot[runtime diagram marks] table[col sep=comma, x={Seahorn-Z3}, y={Horn-DT}] {horn_vs_seahorn.csv};
	
		% Diagonal 
		\draw[black!25] (rel axis cs:0, 0) -- (rel axis cs:1, 1);
	
		\end{loglogaxis}
	\end{tikzpicture}

	\caption{Experimental comparison of Horn-DT with Seahorn on the the sequential programs benchmark suite. TO indicates a time out after 300 s or a false positive.} \label{fig:experimental-results-sequential}

\end{figure}
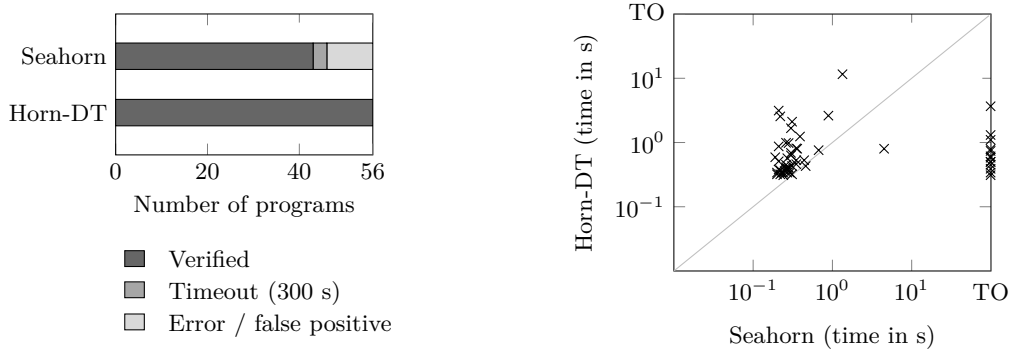

The remainder of this section presents our empirical evaluation on both benchmark suites in detail. All experiments were conducted on a Intel Core i3-4005U 4x1.70GHz CPU with 4\,GB of RAM running Ubuntu 16.04 LTS 64\,bit.
The reported results were obtained using the split and node selection strategies described in Sec.~\ref{sec:attribute-selection}.

\subsubsection*{Sequential Programs}
The first benchmark suite consists of 56 sequential programs taken \textsc{SV-Comp} as well as Garg et al.'s ICE tool~\cite{GNMR16}. These programs vary in complexity and range from simple integer manipulating programs to programs involving nonlinear computations to programs with complex array manipulations. The task for each of the benchmark programs is to find a single inductive invariant or adequate contract for a method call. We refer the reader to \textsc{SV-Comp}~\cite{DBLP:conf/tacas/Beyer17} as well as Garg et al.~\cite{GNMR16} for further details.

Figure~\ref{fig:experimental-results-sequential} summarizes the results of our experimental evaluation on the sequential programs benchmark suite (see Appendix~\ref{app:seq-experiments-details} for details). As shown on the left, Horn-DT is able to synthesize invariants and verify all 56 programs in the benchmarks suite. Seahorn, on the other hand, is able to verify 43 programs (77\,\%) and times out on 3 further programs (5\,\%). Surprisingly, Seahorn produced false positives (i.e., it reported an assertion violation) for 10 programs (18\,\%)---these programs predominantly involve arrays or nonlinear computations. After corresponding with the authors of SeaHorn, it turned out that these false positives were in fact caused by bugs in Seahorn and are now being investigated.

The right-hand-side of Figure~\ref{fig:experimental-results-sequential} shows a comparison of the runtimes of Horn-DT and Seahorn. As expected, Seahorn requires slightly less time to verify programs but frequently times out (or fails) when the programs work over arrays or perform nonlinear computations. Horn-DT, on the other hand, is able to verify the majority of programs in less than 1 s, which demonstrates the effectiveness of our technique.

\subsubsection*{Concurrent and Recursive Programs}
The second benchmark suite consists of 17 concurrent and recursive
programs selected from \textsc{SV-Comp15} \emph{pthread},
\emph{pthread-atomic}, \emph{pthread-ext}, \emph{pthread-lit},
\emph{recursive} and \emph{recursive-simple} benchmarks, as well as
popular concurrent programs from the literature~\cite{mine2014relational}.

\begin{table}[t]
\centering
\caption{Results of Horn-DT on concurrent and recursive programs. Columns show the number of invariants to be synthesized (``Inv''), total number of terms used (``Dim''), number of iterations between teacher and learner (``Rounds''), number of each kind of counter-examples generated (``Pos'', ``Neg'', and ``Horn''), and the time taken (``Time''). Benchmarks with the suffix \emph{-RG} and \emph{-OG} indicate Rely-Guarantee-style proofs and Owicki-Gries-style proofs, respectively.} \label{table:conc-rec-results}

\bigskip
\begin{tabular}{lrrrrrrr}
\toprule
\multicolumn{1}{l}{Benchmark} 				& \multicolumn{1}{c}{Inv} & \multicolumn{1}{c}{Dim} & \multicolumn{1}{c}{Rounds} & \multicolumn{1}{c}{Pos} & \multicolumn{1}{c}{Neg} & \multicolumn{1}{c}{Horn} & \multicolumn{1}{c}{Time (s)} \\
\midrule
Ackermann02                           				& 2                       & 5                       & 13                         & 4                       & 1                       & 8                        & 0.89                         \\ 
Addition03                            				& 2                       & 5                       & 31                         & 2                       & 5                       & 29                       & 1.00                         \\ 
afterrec\_2calls                      				& 2                       & 4                       & 0                          & 0                       & 0                       & 0                        & 0.04                         \\ 
BallRajamani-SPIN2000-Fig1            				& 2                       & 5                       & 32                         & 5                       & 0                       & 30                       & 1.58                         \\ 
fib\_bench-OG                         				& 6                       & 18                      & 7                          & 2                       & 0                       & 5                        & 1.52                         \\ 
fibo\_2calls\_5                       				& 4                       & 6                       & 105                        & 1                       & 3                       & 105                      & 3.65                         \\ 
Fibo\_5                               				& 2                       & 3                       & 71                         & 1                       & 3                       & 67                       & 2.43                         \\ 
id2\_b3\_o2                           				& 4                       & 6                       & 48                         & 1                       & 5                       & 45                       & 1.40                         \\ 
id2\_i5\_o5                           				& 4                       & 6                       & 10                         & 1                       & 1                       & 10                       & 0.72                         \\ 
McCarthy91                            				& 1                       & 2                       & 98                         & 5                       & 91                      & 2                        & 2.77                         \\ 
peterson-OG                           				& 8                       & 32                      & 370                        & 3                       & 91                      & 277                      & 65.31                        \\ 
qw2004-OG                             				& 13                      & 20                      & 24                         & 2                       & 1                       & 23                       & 3.33                         \\ 
stateful01-OG                         				& 6                       & 12                      & 220                        & 2                       & 0                       & 218                      & 9.56                         \\ 
sum\_2x3                              				& 2                       & 5                       & 36                         & 1                       & 2                       & 36                       & 0.99                         \\ 
sum\_non\_eq                          				& 2                       & 4                       & 8                          & 3                       & 1                       & 4                        & 0.58                         \\ 
sum\_non                              				& 2                       & 4                       & 6                          & 4                       & 1                       & 1                        & 0.57                         \\ 
18\_read\_write\_lock-OG              				& 8                       & 16                      & 16                         & 2                       & 1                       & 13                       & 2.42                         \\
\midrule
Mine\_Fig\_1-OG~\cite{mine2014relational}                       & 10                      & 20                      & 35                         & 2                       & 0                       & 41                       & 2.00                         \\ 
Mine\_Fig\_1-RG~\cite{mine2014relational}			& 12                      & 28                      & 43                         & 2                       & 0                       & 48                       & 3.13                         \\ 
Mine\_Fig\_4-OG~\cite{mine2014relational}			& 13                      & 28                      & 51                         & 3                       & 0                       & 65                       & 6.03                         \\ 
Mine\_Fig\_4-RG~\cite{mine2014relational}			& 15                      & 44                      & 123                        & 3                       & 0                       & 159                      & 38.59                        \\ 
pro\_cons\_atomic-OG                  				& 8                       & 16                      & 32                         & 2                       & 0                       & 30                       & 2.11                         \\ 
pro\_cons\_queue-OG                   				& 8                       & 16                      & 31                         & 2                       & 0                       & 29                       & 2.09                         \\ 
12\_hour\_clock\_serialization        				& 1                       & 3                       & 6                          & 1                       & 1                       & 4                        & 1.59                         \\
\bottomrule
\end{tabular}
\end{table}

For concurrent programs we have used both
Rely-Guarantee~\cite{XuRH97} and
Owicki-Gries~\cite{owicki1976verifying} proof techniques to
verify the assertions.  For recursive programs we use a modular
verification technique, in the form of function contracts for each
procedure.
All these programs were manually converted into Boogie
programs. For concurrent programs, we essentially encode the
verification conditions for Rely-Guarantee or
Owicki-Gries proofs in Boogie.
Invariants inside a loop with an empty loop-body can be removed by over constraining invariant at the loop entry. This modification will safely under approximate the search space for invariants.
We have adopted this encoding technique to verify Peterson's algorithm.

Table~\ref{table:conc-rec-results} shows the results of running
our Horn-DT tool on these programs.  We note that each of these programs
produce a large number of Horn counter-examples. The Boogie Teacher
first tries to produce positive or negative counter-examples, and
produces Horn counter-examples only if these are not possible.
This indicates that Horn counter-examples are predominantly
needed for these kinds of program specifications.
Our tool successfully learns invariants for all these programs in
reasonable time, with most finishing within 10 seconds. Only one program took more than 50 s to verify.

Verification using Owicki-Gries proof rules requires
adequate invariants at each program point in each
thread.  In comparison, Rely-Guarantee additionally requires two-state
invariants for each thread for the Rely/Guarantee conditions.
These additional invariants makes learning for Rely-Guarantee
proofs more difficult for the Learner.

%---------- Bibliography ----------
\bibliographystyle{splncs03}
\bibliography{references}

%---------- Appendix ----------
\clearpage
\appendix
% !Tex root=main.tex

\section{Details of Experimental Evaluation}
\label{app:seq-experiments-details}

The following table lists the results of our experimental evaluation on the sequential programs benchmark suite.

\begin{longtable}{l@{\hskip 1.5em}rrrrrr@{\hskip 1.5em}r}
	\caption{Experimental results of Horn-DT and Seahorn on the sequential programs benchmark suite. ``Rounds'' corresponds to the number of rounds of the learning process. ``Pos'', ``Neg'', and ``Horn'' refers to the number of positive, negative, and Horn examples produced during the learning, respectively. ``To'' indicates a timout after 300\,s, while``FP'' indicates a false positive. All times are given in seconds.}\\
	\endfirsthead
	\caption[]{continued}
	\endhead

	\toprule
	Benchmark & \multicolumn{6}{c}{Horn-DT} & Seahorn \\
	\cmidrule(l{-.5em}r{1.5em}){2-7} \cmidrule(l{-.5em}r){8-8}
	& Rounds & Pos & Neg & Horn & Learner time & Total time & Time \\
	\midrule
	add.bpl & 8 & 1 & 6 & 1 & 0.03 & 0.8 & 4.49 \\
	afnp.bpl & 13 & 1 & 3 & 9 & 0.06 & 0.81 & 0.35 \\
	array2.bpl & 5 & 3 & 2 & 1 & 0.05 & 0.43 & 0.46 \\
	array.bpl & 14 & 4 & 4 & 7 & 0.06 & 0.99 & 0.26 \\
	array\_diff.bpl & 3 & 2 & 2 & 0 & 0.01 & 0.34 & 0.23 \\
	arrayinv1.bpl & 134 & 4 & 32 & 100 & 7.44 & 11.57 & 1.34 \\
	arrayinv2.bpl & 45 & 7 & 12 & 27 & 0.38 & 2.55 & 0.22 \\
	bool\_dec.bpl & 3 & 2 & 1 & 0 & 0.01 & 0.31 & FP \\
	bool\_inc.bpl & 3 & 1 & 1 & 1 & 0.15 & 0.45 & FP \\
	cegar1.bpl & 5 & 3 & 1 & 1 & 0.02 & 0.41 & 0.28 \\
	cegar2.bpl & 24 & 4 & 7 & 14 & 0.28 & 1.67 & 0.30 \\
	cggmp.bpl & 67 & 1 & 13 & 53 & 0.69 & 2.11 & 0.31 \\
	countud.bpl & 25 & 3 & 16 & 8 & 0.11 & 0.57 & 0.27 \\
	dec.bpl & 3 & 1 & 2 & 0 & 0.01 & 0.32 & 0.31 \\
	dillig01.bpl & 8 & 2 & 6 & 1 & 0.03 & 0.37 & 0.22 \\
	dillig03.bpl & 9 & 2 & 4 & 3 & 0.04 & 0.38 & 0.21 \\
	dillig05.bpl & 15 & 3 & 10 & 4 & 0.11 & 0.5 & TO \\
	dillig07.bpl & 12 & 3 & 5 & 5 & 0.05 & 0.4 & 0.28 \\
	dillig15.bpl & 12 & 2 & 3 & 7 & 0.05 & 0.41 & 0.26 \\
	dillig17.bpl & 39 & 5 & 9 & 31 & 0.21 & 0.8 & 0.36 \\
	dillig19.bpl & 12 & 2 & 4 & 7 & 0.05 & 0.43 & 0.31 \\
	dillig24.bpl & 15 & 0 & 0 & 17 & 0.08 & 0.53 & 0.44 \\
	dillig25.bpl & 47 & 1 & 25 & 36 & 0.32 & 1.1 & TO \\
	dillig28.bpl & 25 & 1 & 0 & 34 & 0.12 & 0.61 & TO \\
	dtuc.bpl & 28 & 6 & 15 & 14 & 0.14 & 0.74 & FP \\
	ex14.bpl & 2 & 1 & 1 & 0 & 0.01 & 0.31 & 0.24 \\
	ex14c.bpl & 2 & 1 & 1 & 0 & 0.01 & 0.32 & 0.22 \\
	ex23.bpl & 19 & 2 & 7 & 11 & 0.1 & 0.58 & FP \\
	ex7.bpl & 3 & 2 & 2 & 0 & 0.01 & 0.39 & FP \\
	fig1.bpl & 5 & 2 & 3 & 1 & 0.02 & 0.34 & 0.24 \\
	fig3.bpl & 3 & 2 & 2 & 0 & 0.01 & 0.34 & FP \\
	fig9.bpl & 2 & 1 & 1 & 0 & 0.01 & 0.32 & 0.20 \\
	formula22.bpl & 56 & 1 & 9 & 46 & 0.44 & 1.24 & 0.39 \\
	formula25.bpl & 32 & 1 & 28 & 3 & 0.16 & 0.76 & 0.67 \\
	formula27.bpl & 116 & 1 & 105 & 10 & 0.93 & 2.62 & 0.89 \\
	inc2.bpl & 10 & 4 & 3 & 3 & 0.05 & 0.45 & 0.26 \\
	inc.bpl & 102 & 1 & 1 & 100 & 1.98 & 3.14 & 0.21 \\
	loops.bpl & 16 & 3 & 0 & 16 & 0.08 & 0.53 & 0.35 \\
	\bottomrule
	\pagebreak
	\toprule
	Benchmark & \multicolumn{6}{c}{Horn-DT} & Seahorn \\
	\cmidrule(l{-.5em}r{1.5em}){2-7} \cmidrule(l{-.5em}r){8-8}
	& Rounds & Pos & Neg & Horn & Learner time & Total time & Time \\
	\midrule		
	matrixl1.bpl & 8 & 6 & 7 & 1 & 0.04 & 0.51 & 0.21 \\
	matrixl1c.bpl & 8 & 5 & 11 & 0 & 0.04 & 0.5 & FP \\
	matrixl2.bpl & 24 & 14 & 9 & 6 & 0.11 & 0.87 & 0.21 \\
	matrixl2c.bpl & 36 & 15 & 18 & 6 & 0.2 & 1.31 & FP \\
	nc11.bpl & 9 & 2 & 4 & 4 & 0.04 & 0.39 & 0.26 \\
	nc11c.bpl & 8 & 3 & 3 & 3 & 0.04 & 0.39 & 0.27 \\
	sqrt.bpl & 95 & 6 & 49 & 42 & 1.29 & 3.67 & FP \\
	square.bpl & 18 & 1 & 14 & 3 & 0.08 & 0.79 & FP \\
	sum1.bpl & 22 & 5 & 16 & 4 & 0.1 & 0.63 & 0.30 \\
	sum3.bpl & 6 & 1 & 4 & 1 & 0.03 & 0.38 & 0.26 \\
	sum4.bpl & 41 & 1 & 7 & 33 & 0.21 & 0.98 & 0.28 \\
	sum4c.bpl & 19 & 5 & 11 & 4 & 0.09 & 0.59 & 0.19 \\
	tacas.bpl & 32 & 15 & 9 & 11 & 0.17 & 0.69 & 0.30 \\
	trex1.bpl & 4 & 2 & 3 & 0 & 0.02 & 0.36 & 0.25 \\
	trex3.bpl & 10 & 4 & 7 & 2 & 0.04 & 0.46 & 0.35 \\
	vsend.bpl & 2 & 1 & 1 & 0 & 0.01 & 0.34 & 0.20 \\
	w1.bpl & 4 & 2 & 1 & 1 & 0.02 & 0.34 & 0.29 \\
	w2.bpl & 4 & 1 & 2 & 1 & 0.02 & 0.34 & 0.27 \\
	\bottomrule

\end{longtable}

\end{document}